\undefined \DeclareGraphicsRule{*}{eps}{*}{} \else
\newtheorem{lemma}{Lemma}
\newtheorem{theorem}{Theorem}
\newtheorem{corollary}{Corollary}
\newtheorem{conjecture}{Conjecture}
\numberwithin{figure}{section} \numberwithin{definition}{section}
\numberwithin{observation}{section} \numberwithin{lemma}{section}\numberwithin{corollary}{section}
\numberwithin{theorem}{section} \numberwithin{proposition}{section}
\numberwithin{conjecture}{section} \numberwithin{table}{section}
\begin{document}
\title{
{The crossing numbers of $K_{n,n}-nK_2$, $K_{n}\times P_2$, $K_{n}\times P_3$ and $K_n\times C_4$}\footnote{The research is supported by NSFC of China (No.60973014,\ 61170303)}}
\author{
Yuansheng Yang \footnote {corresponding
author's email : yangys@dlut.edu.cn}, \ Baigong Zheng, \ Xiaohui Lin, \ Xirong Xu \\
School of Computer Science and Technology \\
Dalian University of Technology, Dalian, 116024, P.R. China }

\date{}
\maketitle
\begin{abstract}
The {\it crossing number} of a graph $G$ is the minimum number of
pairwise intersections of edges among all drawings of $G$. In this
paper, we study the crossing number of $K_{n,n}-nK_2$, $K_n\times
P_2$, $K_n\times P_3$ and $K_n\times C_4$.
\bigskip

\noindent {\bf Keywords:} {\it Crossing number}; {\it Drawing}; {\it
Complete bipartite graphs}; {\it Kronecker product}
\end{abstract}

\section{Introduction and preliminaries}

\indent \indent Let $G$ be a graph, $V(G)$ the vertex set and $E(G)$
the edge set. The {\it crossing number} of $G$, denoted by $cr(G)$,
is the smallest number of pairwise crossings of edges among all
drawings of $G$ in the plane. We use $D$ to denote a drawing of a
graph $G$ and $\nu(D)$ the number of crossings in $D$. It is clear
that $cr(G)\leq \nu(D)$.

Let $E_1$ and $E_2$ be two disjoint subsets of an edge set $E$. The
number of the crossings formed by an edge in $E_1$ and another edge
in $E_2$ is denoted by $\nu_D(E_1,E_2)$ in a drawing $D$. The number
of the crossings that involve a pair of edges in $E_1$ is denoted by
$\nu_D(E_1)$. Then $\nu(D)=\nu_D(E(G))$ and $\nu_D(E_1\cup
E_2)=\nu_D(E_1)+\nu_D(E_2)+\nu_D(E_1,E_2)$.

The $Kronecker$ $product$ $G\times H$ of graphs $G$ and $H$ has
vertex set $V(G\times H)=V(G)\times V(H)$ and edge set $E(G\times
H)=\{\{(a,x),(b,y)\}:\{a,b\}\in E(G)$ and $\{x,y\}\in E(H)\}$. (The
product is also known as direct product, cardinal product, cross
product and graph conjunction.)

Computing the crossing number of graphs is a complicated yet
classical problem. It is proved that the problem is NP-complete by
Garey and Johnson \cite{GJ83}.

Zarankiewicz \cite{Z54} gave a drawing of $K_{m,n}$, which
demonstrated that
$$cr(K_{m,n})\leq Z(m,n)=\lfloor\frac{m}{2}\rfloor\lfloor\frac{m-1}{2}\rfloor\lfloor\frac{n}{2}\rfloor\lfloor\frac{n-1}{2}\rfloor.$$
The equality holds for $min(m,n)\leq 6$ \cite{K70} and $7\leq
n\leq8$, $7\leq m\leq10$ \cite{W93}.

Gue \cite{G69} and Bla$\check{z}$ek and Koman \cite{BK64} gave a
drawing of $K_{n}$ in a cylinder (homeomorphic to a plane, a
sphere), which demonstrated that
$$cr(K_{n})\leq Z(n)=\frac{1}{4}\lfloor\frac{n}{2}\rfloor\lfloor\frac{n-1}{2}\rfloor\lfloor\frac{n-2}{2}\rfloor\lfloor\frac{n-3}{2}\rfloor.$$
The equality holds for $n\leq12$ \cite{A96}.

De Klerk, Pasechnik and Schrijver \cite{D07} proved $cr(K_{m,n})\geq 0.8594Z(m,n)$ and $cr(K_{n})\geq 0.8594Z(n)$.

In literature, the Cartesian product has been paid more attention
\cite{A04,Z07,Z08,Z081}, while Kronecker product has fewer results
on the crossing number \cite{JD12}.

In this paper, we study the crossing numbers of $K_{n,n}-nK_2$ and
the Kronecker product $K_n\times P_2$, $K_n\times P_3$, $K_n\times
C_4$. In Section 2, we give upper bounds of $cr(K_{n,n}-nK_2)$ and
$cr(K_{n}\times P_2)$ by constructing a drawing of $K_{n,n}-nK_2$ in
a cylinder. In Section 3, we give upper bound of $cr(K_{n}\times
P_3)$ by constructing a drawing of $K_n\times P_3$ based on the
drawing of $K_{n,n}-nK_2$. In Section 4, we give upper bound of
$cr(K_{n}\times C_4)$ by constructing a drawing of $K_n\times C_4$
based on the drawing of $K_{n,n}-nK_2$, too. In Section 5, we give
lower bounds of $cr(K_{n,n}-nK_2)$, $cr(K_{n}\times P_2)$,
$cr(K_{n}\times P_3)$ and $cr(K_{n}\times C_4)$.

\section{Upper bounds of $cr(K_{n,n}-nK_2)$ and $cr(K_{n}\times P_2)$}

\indent \indent Let
$$\begin{array}{llll}
V(K_{n,n}-nK_2)&=\{a_{i},b_{i}\ |\ 0\leq i\leq n-1\},\\
E(K_{n,n}-nK_2)&=\{(a_{i},b_{j})\ |\ 0\leq i\neq j\leq n-1\}.
\end{array}$$

In Figure  \ref{fig: Dn}, we exhibit drawings $D_n$ of
$K_{n,n}-nK_2$ in a cylinder for $n\leq 13$. A cylinder can be
`assembled' from a polygon by identifying one pair of opposite sides
of a rectangle \cite{BW78}.

\begin{figure}
\centering
\includegraphics[scale=1.0]{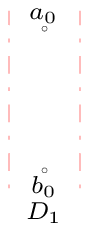}
\hspace{4bp}
\includegraphics[scale=1.0]{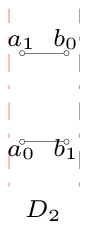}
\hspace{4bp}
\includegraphics[scale=1.0]{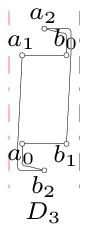}
\hspace{4bp}
\includegraphics[scale=1.0]{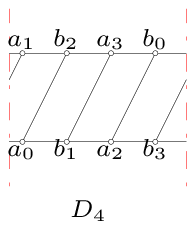}
\hspace{4bp}
\includegraphics[scale=1.0]{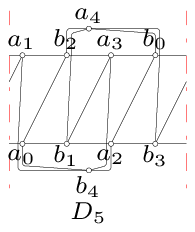}
\hspace{4bp}
\includegraphics[scale=1.0]{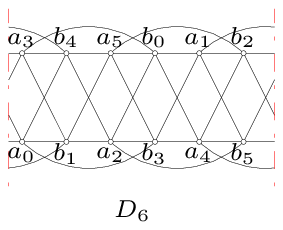}
\hspace{4bp}
\includegraphics[scale=1.0]{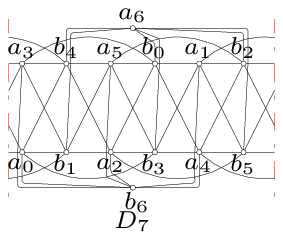}
\includegraphics[scale=1.0]{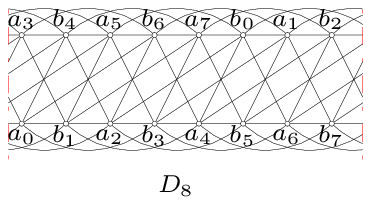}
\includegraphics[scale=1.0]{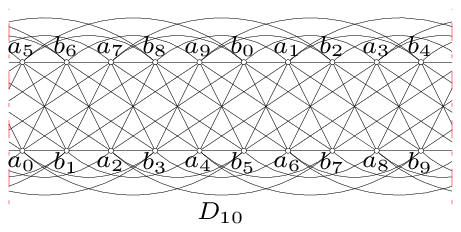}
\includegraphics[scale=1.0]{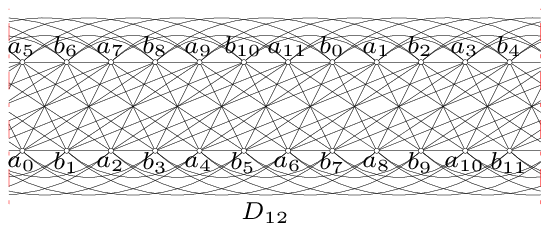}
\includegraphics[scale=1.0]{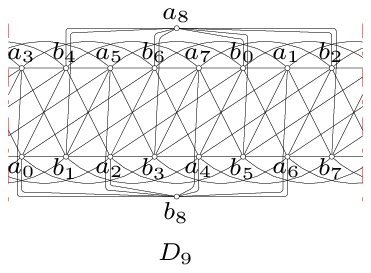}
\includegraphics[scale=1.0]{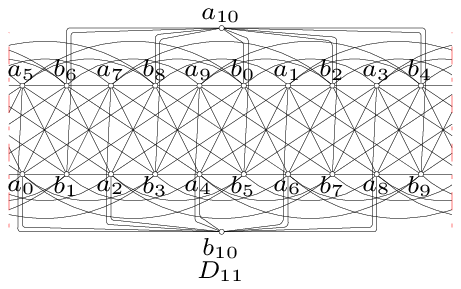}
\includegraphics[scale=1.0]{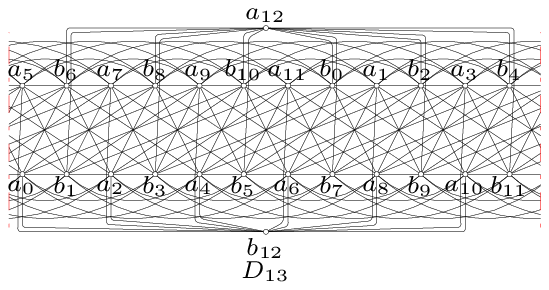}
\caption{\small{Drawings $D_n$ for $n\leq 13$}}\label{fig: Dn}
\end{figure}

From drawings in Figure \ref{fig: Dn}, we have
\begin{lemma}\label{lemma: D1}
For $n\leq 4$, $cr(K_{n,n}-nK_2)=0$.
\end{lemma}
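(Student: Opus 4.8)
The plan is to produce, for each $n\in\{1,2,3,4\}$, a drawing of $K_{n,n}-nK_2$ with no crossings; since $cr(G)\ge 0$ for every graph $G$, any such drawing forces $cr(K_{n,n}-nK_2)=0$. The quickest route is to read the drawings $D_n$ off Figure \ref{fig: Dn}: a cylinder embeds in the sphere, and the crossing number of a graph is the same whether it is minimised over drawings in the plane or over drawings on the sphere, so a drawing of a graph in a cylinder is in particular a spherical drawing with the same number of crossings. Hence it suffices to observe that $\nu(D_n)=0$ for $n\le 4$, which is visible from the figure.

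For a self-contained verification I would instead identify each of the four graphs explicitly. For $n=1$ the edge set is empty, and for $n=2$ it is $\{(a_0,b_1),(a_1,b_0)\}$, a disjoint union of two edges; both are trivially planar. For $n=3$, since $a_i$ is adjacent to exactly the two vertices $b_j$ with $j\ne i$ (indices in $\{0,1,2\}$), the graph is the single cycle $a_0b_1a_2b_0a_1b_2a_0$ on all $6$ edges, hence planar. For $n=4$ the graph is $3$-regular bipartite on $8$ vertices with $12$ edges, and a crossing-free drawing is obtained in the ``cube'' pattern: outer quadrilateral $a_0b_1a_2b_3$, inner quadrilateral $b_2a_3b_0a_1$, and spokes $(a_0,b_2),(b_1,a_3),(a_2,b_0),(b_3,a_1)$; one checks that each listed pair $(a_i,b_j)$ has $i\ne j$ and that every edge of $K_{4,4}-4K_2$ occurs exactly once, so this is indeed a plane drawing of $K_{4,4}-4K_2$. (Equivalently, $K_{4,4}-4K_2\cong Q_3$, the planar $3$-cube, since the non-edges of $K_{4,4}-4K_2$ inside $K_{4,4}$ form a perfect matching and $K_{4,4}$ minus a perfect matching is unique up to isomorphism.)

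There is essentially no obstacle in this lemma; the only case with any content is $n=4$, where $|E|=12=2|V|-4$ meets the Euler bound for bipartite planar graphs with equality, so a mere edge count cannot rule out non-planarity and one genuinely has to display (or point to) an explicit crossing-free drawing — which is exactly what $D_4$ in Figure \ref{fig: Dn} provides.
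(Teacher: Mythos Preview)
Your proof is correct and follows the same approach as the paper, which simply reads the planarity of $K_{n,n}-nK_2$ for $n\le 4$ off the drawings $D_n$ in Figure~\ref{fig: Dn}. Your additional explicit identifications (empty graph, $2K_2$, a $6$-cycle, and the cube $Q_3$) go beyond what the paper supplies but are consistent with it.
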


For even $n\geq 6$, let
$$\begin{array}{llll}
X&=\{a_{2i},b_{2i+1} \ |\ 0\leq i\leq n/2-1\},\\
Y&=\{a_{2i+1},b_{2i} \ |\ 0\leq i\leq n/2-1\},\\
E_X&=\{uv \ |\ u,v\in X\},\\
E_{XY}&=\{uv \ |\ u\in X \mbox{ and } v\in Y\},\\
E_Y&=\{uv \ |\ u,v\in Y\}.
\end{array}$$

Then
$$\begin{array}{llll}
V(K_{n,n}-nK_2)&=X\bigcup Y,\ \  X\bigcap Y=\emptyset.\\
E(K_{n,n}-nK_2)&=E_X\bigcup E_{XY}\bigcup E_{Y}.
\end{array}$$

In drawings $D_{n}$ with even $n\geq 6$, vertices
$a_{2\lceil\frac{n}{4}\rceil-1},b_{2\lceil\frac{n}{4}\rceil},\cdots,a_{2\lceil\frac{n}{4}\rceil-3},b_{2\lceil\frac{n}{4}\rceil-2}$
($a_{0},$ $b_{1},\cdots,a_{n-2},b_{n-1}$) of $X$ $(Y)$ are placed
equidistantly on the perimeter of the top (bottom) disk, edges of
$E_X$ $(E_Y)$ are drawn on the top (bottom) disk, and edges of
$E_{XY}$ are drawn by shortest helical curves on the cylinder. It
follows
$\nu_{D_{n}}(E_X,E_{XY})=\nu_{D_{n}}(E_Y,E_{XY})=\nu_{D_{n}}(E_X,E_{Y})=0$
and
$\nu(D_{n})=\nu_{D_{n}}(E_X)+\nu_{D_{n}}(E_Y)+\nu_{D_{n}}(E_{XY})=2\nu_{D_{n}}(E_X)+\nu_{D_{n}}(E_{XY})$.

\noindent For even $n\geq 6$ and $n$ mod 4=2,
$$\begin{array}{llll}
\nu_{D_{n}}(E_X)&=\frac{1}{2}(n\sum^{\frac{n-6}{4}}_{d=1}2d(\frac{n}{2}-1-d)+\frac{n}{2}2(\frac{n-2}{4})^2)\\
&=\frac{n(n-2)}{2}\frac{\frac{n-6}{4}\frac{n-2}{4}}{2}-n\frac{\frac{n-6}{4}\frac{n-2}{4}\frac{n-4}{2}}{6}+\frac{n(n-2)^2}{32}\\
&=\frac{n(n-2)(3n^2-24n+36-n^2+10n-24+6n-12)}{192}\\
&=\frac{n^2(n-2)(n-4)}{96}.
\end{array}$$
For even $n\geq 8$ and $n$ mod 4=0,
$$\begin{array}{llll}
\nu_{D_{n}}(E_X)&=\frac{n}{2}\sum^{\frac{n-4}{4}}_{d=1}2d(\frac{n}{2}-1-d)\\
&=\frac{n(n-2)}{2}\frac{\frac{n}{4}\frac{n-4}{4}}{2}-n\frac{\frac{n-4}{4}\frac{n}{4}\frac{n-2}{2}}{6}\\
&=\frac{n^2(n-2)(n-4)}{64}-\frac{n^2(n-2)(n-4)}{192} \\
&=\frac{n^2(n-2)(n-4)}{96}.
\end{array}$$
For even $n\geq 6$,
$$\begin{array}{llll}
\nu_{D_{n}}(E_{XY})&=n\sum^{\frac{n-4}{2}}_{d=1}\sum^{d}_{i=1}(2i-1)\\
&=n\sum^{\frac{n-4}{2}}_{d=1}d^2\\
&=n\frac{\frac{n-4}{2}\frac{n-2}{2}(n-3)}{6}\\
&=\frac{n(n-2)(n-3)(n-4)}{24}.
\end{array}$$
Hence, for even $n\geq 6$,
$$\begin{array}{llll}
\nu(D_{n})&=2\nu_{D_{n}}(E_X)+\nu_{D_{n}}(E_{XY})\\
&=\frac{n^2(n-2)(n-4)}{48}+\frac{n(n-2)(n-3)(n-4)}{24}\\
&=\frac{n(n-2)^2(n-4)}{16}\\
&=\lfloor\frac{n}{2}\rfloor\lfloor\frac{n-1}{2}\rfloor\lfloor\frac{n-2}{2}\rfloor\lfloor\frac{n-3}{2}\rfloor.
\end{array}$$
So we have the following Lemma \ref{lemma: D4}:
\begin{lemma}\label{lemma: D4}
For even $n\geq 6$, $\nu(D_{n})= \lfloor\frac{n}{2}\rfloor\lfloor\frac{n-1}{2}\rfloor\lfloor\frac{n-2}{2}\rfloor\lfloor\frac{n-3}{2}\rfloor$.
\end{lemma}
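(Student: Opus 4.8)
The plan is to verify directly, by a crossing count on the explicitly described drawing $D_n$, that $\nu(D_n)$ equals the claimed value; no optimality is asserted at this stage. First I would record the structure of $D_n$ for even $n\ge 6$: the $n$ vertices of $X$ lie on the boundary circle of the top disk in the stated cyclic order, the $n$ vertices of $Y$ lie on the boundary circle of the bottom disk, every edge of $E_X$ (resp. $E_Y$) is a chord of the top (resp. bottom) disk, and every edge of $E_{XY}$ is a shortest helical arc on the cylindrical wall joining its two endpoints. From this description the three ``mixed'' contributions vanish: a chord of $E_X$ lies in the top disk and a chord of $E_Y$ lies in the bottom disk, which are disjoint; and a helical arc meets each of the two disks only at a single boundary point (its endpoint), hence crosses no chord of $E_X$ or $E_Y$. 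Therefore $\nu(D_n)=\nu_{D_n}(E_X)+\nu_{D_n}(E_Y)+\nu_{D_n}(E_{XY})$, and by the symmetry between the top and bottom halves of $D_n$ we have $\nu_{D_n}(E_X)=\nu_{D_n}(E_Y)$, so it remains to evaluate $\nu_{D_n}(E_X)$ and $\nu_{D_n}(E_{XY})$.

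Next I would compute $\nu_{D_n}(E_{XY})$. The edges of $E_{XY}$ are the helical arcs joining a top vertex to a bottom vertex at a different angular position, and since each arc takes the short way around the cylinder, the number of times two arcs cross is governed by the circular spread of their endpoints. Using the rotational symmetry of the $n$ equidistant positions I would fix an anchor (contributing the overall factor $n$) and, for each value of the spread parameter $d$ with $1\le d\le (n-4)/2$, count the arcs crossed, obtaining $\sum_{i=1}^{d}(2i-1)=d^{2}$ crossings for that value of $d$; summing gives $\nu_{D_n}(E_{XY})=n\sum_{d=1}^{(n-4)/2}d^{2}=\frac{n(n-2)(n-3)(n-4)}{24}$ by the sum-of-squares identity.

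Then I would compute $\nu_{D_n}(E_X)$. Here $E_X$ is a copy of $K_{n/2,n/2}$ all of whose $n$ vertices lie on the top circle and whose edges are straight chords, so $\nu_{D_n}(E_X)$ is simply the number of interleaving pairs of chords, which depends only on the cyclic order of $X$ recorded above. Reading off that order and grouping interleaving pairs by the separation $d$ of their endpoints, one gets $\nu_{D_n}(E_X)=\frac n2\sum_{d=1}^{(n-4)/4}2d\left(\frac n2-1-d\right)$ when $n\equiv 0\pmod 4$, and the same sum together with one extra ``middle'' term $\frac n2\cdot 2\left(\frac{n-2}{4}\right)^{2}$ (with the whole expression halved) when $n\equiv 2\pmod 4$, the split reflecting the parity of $n/2$. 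In both cases the arithmetic-progression and sum-of-squares formulas collapse the expression to the single value $\nu_{D_n}(E_X)=\frac{n^{2}(n-2)(n-4)}{96}$.

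Finally I would add the pieces: $\nu(D_n)=2\nu_{D_n}(E_X)+\nu_{D_n}(E_{XY})=\frac{n^{2}(n-2)(n-4)}{48}+\frac{n(n-2)(n-3)(n-4)}{24}$, and factoring $n(n-2)(n-4)$ over the common denominator $48$ gives $\frac{n(n-2)(n-4)(3n-6)}{48}=\frac{n(n-2)^{2}(n-4)}{16}$; since $n$ is even this equals $\lfloor\frac n2\rfloor\lfloor\frac{n-1}2\rfloor\lfloor\frac{n-2}2\rfloor\lfloor\frac{n-3}2\rfloor$, which is the assertion. I expect the main obstacle to be the bookkeeping in the $\nu_{D_n}(E_X)$ count: one must pin down the exact cyclic order of $X$ from the definitions, separate cleanly into the residues $n\equiv 0$ and $n\equiv 2\pmod 4$, and avoid double-counting interleaving chord pairs. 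The $\nu_{D_n}(E_{XY})$ count and the closing algebra are comparatively routine.
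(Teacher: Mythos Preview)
Your proposal is correct and follows essentially the same route as the paper: decompose $\nu(D_n)$ as $2\nu_{D_n}(E_X)+\nu_{D_n}(E_{XY})$ after noting the mixed terms vanish, evaluate $\nu_{D_n}(E_{XY})=n\sum_{d=1}^{(n-4)/2}d^2$ and $\nu_{D_n}(E_X)=\frac{n^2(n-2)(n-4)}{96}$ via a case split on $n\bmod 4$, and combine to get $\frac{n(n-2)^2(n-4)}{16}$. The only imprecision is your phrase ``the same sum \ldots\ with the whole expression halved'' for the $n\equiv 2\pmod 4$ case---the paper's sum there runs to $(n-6)/4$ with a leading $n$ (not $n/2$) before halving---but this is a minor wording issue and does not affect the argument.
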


For odd $n\geq 5$, let
$$\begin{array}{llll}
X&=\{a_{2i},b_{2i+1} \ |\ 0\leq i<(n-1)/2\},\\
Y&=\{a_{2i+1},b_{2i} \ |\ 0\leq i<(n-1)/2\},\\
Z&=\{a_{n-1},b_{n-1} \},\\
E_X&=\{uv \ |\ u,v\in X\},\\
E_{XY}&=\{uv \ |\ u\in X \mbox{ and } v\in Y\},\\
E_Y&=\{uv \ |\ u,v\in Y\},\\
E_{ZXY}&=\{uv \ |\ \ u\in Z \mbox{ and } v\in X\bigcup Y\}.
\end{array}$$
Then
$$\begin{array}{llll}
V(K_{n,n}-nK_2)&=X\bigcup Y\bigcup Z,\ \  X\bigcap Y=X\bigcap Z=Y\bigcap Z=\emptyset.\\
E(K_{n,n}-nK_2)&=E_X\bigcup E_{XY}\bigcup E_{Y}\bigcup E_{ZXY}.
\end{array}$$

In drawings $D_{n}$ with  odd $n\geq 5$, vertices
$a_{2\lceil\frac{n-1}{4}\rceil-1},b_{2\lceil\frac{n-1}{4}\rceil},\cdots,a_{2\lceil\frac{n-1}{4}\rceil-3},b_{2\lceil\frac{n-1}{4}\rceil-2}$
($a_{0}$, $b_{1},\cdots$, $a_{n-3},b_{n-2})$ of $X$ $(Y)$ are placed
equidistantly on the perimeter of the top (bottom) disk, vertex
$a_{n-1}$ $(b_{n-1})$ is placed on the center of the top (bottom)
disk, edges of $E_X$ $(E_Y)$ are drawn on the top (bottom) disk, and
edges of $E_{XY}$ are drawn by shortest helical curves on the
cylinder. For $0\leq i\leq (n-3)/2$, edges $a_{n-1}b_{2i}$
($b_{n-1}a_{2i}$) are drawn on the top(bottom) disk, and edges
$a_{n-1}b_{2i+1}$ ($b_{n-1}a_{2i+1}$) are drawn on the top (bottom)
disk and cylinder. It follows $\nu_{D_{n}}(E_{X}\bigcup E_{Y}\bigcup
E_{XY})=\nu(D_{n-1})$, $\nu_{D_{n}}(E_{ZXY})=0$ and
$$\begin{array}{llll}
&&\nu_{D_{n}}(E_{X}\bigcup E_{Y}\bigcup E_{XY},E_{ZXY})\\
&=&(n-1)(2\lfloor\frac{n-1}{4}\rfloor\lfloor\frac{n-3}{4}\rfloor+\lfloor\frac{n-1}{4}\rfloor+\sum^{\frac{n-5}{2}}_{d=0}d)\\
&=&(n-1)(\lfloor\frac{n-1}{4}\rfloor(2\lfloor\frac{n-3}{4}\rfloor+1)+\frac{\frac{n-5}{2}\frac{n-3}{2}}{2})\\
&=&(n-1)(\frac{(n-1)(n-3)}{8}+\frac{(n-3)(n-5)}{8})\\
&=&\frac{(n-1)(n-3)^2}{4}.
\end{array}$$
By Lemma \ref{lemma: D4},
$$\begin{array}{llll}
\nu(D_{n})&=\nu_{D_{n}}(E_{X}\bigcup E_{Y}\bigcup E_{XY})+\nu_{D_{n}}(E_{X}\bigcup E_{Y}\bigcup E_{XY},E_{ZXY})+\nu_{D_{n}}(E_{ZXY})\\
&=\frac{(n-1)(n-3)^2(n-5)}{16}+\frac{(n-1)(n-3)^2}{4}\\
&=\frac{(n-1)^2(n-3)^2}{16}\\
&=\lfloor\frac{n}{2}\rfloor\lfloor\frac{n-1}{2}\rfloor\lfloor\frac{n-2}{2}\rfloor\lfloor\frac{n-3}{2}\rfloor.
\end{array}$$
So we have the following Lemma \ref{lemma: D5}:
\begin{lemma}\label{lemma: D5}
For odd $n\geq 5$, $\nu(D_{n})= \lfloor\frac{n}{2}\rfloor\lfloor\frac{n-1}{2}\rfloor\lfloor\frac{n-2}{2}\rfloor\lfloor\frac{n-3}{2}\rfloor$.
\end{lemma}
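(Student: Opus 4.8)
The plan is to verify that the explicit drawing $D_n$ described above realises exactly $\lfloor\frac{n}{2}\rfloor\lfloor\frac{n-1}{2}\rfloor\lfloor\frac{n-2}{2}\rfloor\lfloor\frac{n-3}{2}\rfloor$ crossings, by splitting $\nu(D_n)$ along the partition $E(K_{n,n}-nK_2)=E_X\cup E_{XY}\cup E_Y\cup E_{ZXY}$. Applying $\nu_D(E_1\cup E_2)=\nu_D(E_1)+\nu_D(E_2)+\nu_D(E_1,E_2)$ twice gives
$$\nu(D_n)=\nu_{D_n}\bigl(E_X\cup E_Y\cup E_{XY}\bigr)+\nu_{D_n}(E_{ZXY})+\nu_{D_n}\bigl(E_X\cup E_Y\cup E_{XY},\,E_{ZXY}\bigr),$$
so it suffices to evaluate these three terms and add them.

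First I would handle $\nu_{D_n}(E_X\cup E_Y\cup E_{XY})$. The point is that deleting the two central vertices $a_{n-1},b_{n-1}$ together with all of $E_{ZXY}$ from $D_n$ leaves precisely the drawing $D_{n-1}$ of $K_{n-1,n-1}-(n-1)K_2$ on the same cylinder: the vertices of $X$ and $Y$ occupy the same positions on the two rim circles and $E_X,E_Y,E_{XY}$ are routed identically. Hence $\nu_{D_n}(E_X\cup E_Y\cup E_{XY})=\nu(D_{n-1})=\lfloor\frac{n-1}{2}\rfloor\lfloor\frac{n-2}{2}\rfloor\lfloor\frac{n-3}{2}\rfloor\lfloor\frac{n-4}{2}\rfloor=\frac{(n-1)(n-3)^2(n-5)}{16}$, using Lemma~\ref{lemma: D4} when $n\ge7$ and the planar $n=4$ picture of Figure~\ref{fig: Dn} (equivalently Lemma~\ref{lemma: D1}) when $n=5$. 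Next I would check that the $E_{ZXY}$ part of the drawing is self‑crossing‑free: the edges at $a_{n-1}$ leave the top‑disk centre in the cyclic order of their endpoints around the rim, so no two of them cross, symmetrically at $b_{n-1}$, and the construction keeps the $a_{n-1}$‑edges and the $b_{n-1}$‑edges in disjoint regions; hence $\nu_{D_n}(E_{ZXY})=0$.

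The main obstacle is the mixed term $\nu_{D_n}(E_X\cup E_Y\cup E_{XY},E_{ZXY})$. Here I would fix one edge of $E_{ZXY}$ at a time and tally its crossings with the top‑disk chords $E_X$, the bottom‑disk chords $E_Y$, and the helical arcs $E_{XY}$ separately: an edge $a_{n-1}b_j$ with $b_j\in X$ stays on the top disk and meets only chords of $E_X$, whereas an edge $a_{n-1}b_j$ with $b_j\in Y$ traverses the cylinder and meets arcs of $E_{XY}$ as well as chords on both disks, the number of each kind being controlled by how far around the rim its endpoint lies. Using the rotational symmetry of the $n-1$ rim vertices on each disk and the top–bottom reflection swapping $a_{n-1}\leftrightarrow b_{n-1}$, this collapses to the triangular sum $\sum_{d=0}^{(n-5)/2}d$ plus the boundary contributions $2\lfloor\frac{n-1}{4}\rfloor\lfloor\frac{n-3}{4}\rfloor+\lfloor\frac{n-1}{4}\rfloor$, with the total equal to $(n-1)$ times that bracket, i.e. $\frac{(n-1)(n-3)^2}{4}$. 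The delicate book‑keeping is that the bracket involves $\lfloor\frac{n-1}{4}\rfloor$ and $\lfloor\frac{n-3}{4}\rfloor$, so its simplification genuinely splits into the residue cases $n\equiv1\pmod4$ and $n\equiv3\pmod4$; in each one checks $\lfloor\frac{n-1}{4}\rfloor\bigl(2\lfloor\frac{n-3}{4}\rfloor+1\bigr)=\frac{(n-1)(n-3)}{8}$, after which everything is elementary.

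Finally I would assemble the pieces:
$$\nu(D_n)=\frac{(n-1)(n-3)^2(n-5)}{16}+0+\frac{(n-1)(n-3)^2}{4}=\frac{(n-1)(n-3)^2\bigl((n-5)+4\bigr)}{16}=\frac{(n-1)^2(n-3)^2}{16},$$
and since $n$ is odd this equals $\lfloor\frac{n}{2}\rfloor\lfloor\frac{n-1}{2}\rfloor\lfloor\frac{n-2}{2}\rfloor\lfloor\frac{n-3}{2}\rfloor=\bigl(\tfrac{n-1}{2}\bigr)^2\bigl(\tfrac{n-3}{2}\bigr)^2$, which proves the lemma. I expect essentially all the difficulty to sit in justifying the per‑edge crossing tallies for the mixed term and in carrying out the $n\bmod4$ case split; the reduction to $D_{n-1}$ and the final summation are routine.
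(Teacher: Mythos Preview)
Your proposal is correct and follows essentially the same route as the paper: the same three-term decomposition $\nu(D_n)=\nu(D_{n-1})+\nu_{D_n}(E_{ZXY})+\nu_{D_n}(E_X\cup E_Y\cup E_{XY},E_{ZXY})$, the same formula $(n-1)\bigl(2\lfloor\tfrac{n-1}{4}\rfloor\lfloor\tfrac{n-3}{4}\rfloor+\lfloor\tfrac{n-1}{4}\rfloor+\sum_{d=0}^{(n-5)/2}d\bigr)$ for the mixed term, and the same final assembly. Your write-up is in fact a bit more careful than the paper's in spelling out the $n=5$ base case via Lemma~\ref{lemma: D1} and in flagging the $n\bmod 4$ split needed for the identity $\lfloor\tfrac{n-1}{4}\rfloor(2\lfloor\tfrac{n-3}{4}\rfloor+1)=\tfrac{(n-1)(n-3)}{8}$.
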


By Lemma \ref{lemma: D1}-\ref{lemma: D5}, we have
\begin{theorem}\label{theorem: knn} For $n\geq 1$,
$cr(K_{n,n}-nK_2)\leq
\lfloor\frac{n}{2}\rfloor\lfloor\frac{n-1}{2}\rfloor\lfloor\frac{n-2}{2}\rfloor\lfloor\frac{n-3}{2}\rfloor$.
\end{theorem}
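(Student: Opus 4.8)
The plan is to assemble the theorem from the three lemmas already proved together with the trivial inequality $cr(G)\le \nu(D)$, valid for every drawing $D$ of a graph $G$. For each $n\ge 1$ the figure and the accompanying text produce an explicit drawing $D_n$ of $K_{n,n}-nK_2$ in the cylinder, hence in the plane (the cylinder being homeomorphic to a punctured sphere); so it suffices to check that in each size/parity regime the number of crossings $\nu(D_n)$ matches $\lfloor\frac{n}{2}\rfloor\lfloor\frac{n-1}{2}\rfloor\lfloor\frac{n-2}{2}\rfloor\lfloor\frac{n-3}{2}\rfloor$, or is zero precisely when that product is zero.

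The argument then splits into the three cases that have already been set up. For $1\le n\le 4$ the product $\lfloor\frac{n}{2}\rfloor\lfloor\frac{n-1}{2}\rfloor\lfloor\frac{n-2}{2}\rfloor\lfloor\frac{n-3}{2}\rfloor$ equals $0$ (the factor $\lfloor\frac{n-3}{2}\rfloor$ is $\le 0$, and one checks directly that the whole product is $0$), while Lemma~\ref{lemma: D1} gives $cr(K_{n,n}-nK_2)=0$; so the inequality holds. For even $n\ge 6$, Lemma~\ref{lemma: D4} yields $\nu(D_n)=\lfloor\frac{n}{2}\rfloor\lfloor\frac{n-1}{2}\rfloor\lfloor\frac{n-2}{2}\rfloor\lfloor\frac{n-3}{2}\rfloor$, so $cr(K_{n,n}-nK_2)\le\nu(D_n)$ gives the claimed bound. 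For odd $n\ge 5$, Lemma~\ref{lemma: D5} does the same. Since these three regimes exhaust all $n\ge 1$, the proof is complete.

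There is no genuine obstacle remaining: the real work — describing the drawings $D_n$ and evaluating $\nu_{D_n}(E_X)$, $\nu_{D_n}(E_{XY})$ and the cross terms involving $Z$ via the displayed summation identities — is exactly what precedes Lemmas~\ref{lemma: D4} and~\ref{lemma: D5}. The only points needing a moment's care are the correct interpretation of the product at the small values $n\le 4$, so that the base case of Lemma~\ref{lemma: D1} lines up with the formula, and the observation that the odd-$n$ construction underlying Lemma~\ref{lemma: D5} refers back to $D_{n-1}$ with $n-1$ even, which is covered by Lemma~\ref{lemma: D4} when $n\ge 7$ and by Lemma~\ref{lemma: D1} when $n=5$.
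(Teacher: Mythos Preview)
Your proposal is correct and follows exactly the approach the paper takes: the theorem is obtained by combining Lemmas~\ref{lemma: D1}, \ref{lemma: D4} and \ref{lemma: D5} with the inequality $cr(G)\le\nu(D)$. Your extra remark that the $n=5$ case of Lemma~\ref{lemma: D5} implicitly relies on $\nu(D_4)=0$ (covered by Lemma~\ref{lemma: D1} rather than Lemma~\ref{lemma: D4}) is a valid clarification of a point the paper glosses over.
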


Since $K_n\times P_2\cong K_{n,n}-nK_2$, we have
\begin{corollary} For $n\geq 1$,
$cr(K_n\times P_2)\leq
\lfloor\frac{n}{2}\rfloor\lfloor\frac{n-1}{2}\rfloor\lfloor\frac{n-2}{2}\rfloor\lfloor\frac{n-3}{2}\rfloor$.
\end{corollary}

\section{Upper bound of $cr(K_{n}\times P_3)$}

\indent \indent In this section, firstly we introduce some topological tools, and then we give a drawing for $K_{n}\times P_3$ based on $D_n$.

We define a structure $M^2_{l,r}$ in the real plane $\mathbb{R}^2$.
For the vertical points $(0,0)$ and $(0,1)$, let
$S_l=\{(l^0_{i},l^1_{i}):0\leq i\leq l-1\}$
($S_r=\{(r^0_{i},r^1_{i}):0\leq i\leq r-1\}$) be a set of
non-vertical bunches of lines in the left (right) semi-plane, such
that the point $(0,j)$ belongs to $l^j_{i}$ ($r^j_{i}$) for $0\leq
i\leq l-1$ ($0\leq i\leq r-1$) and $j=0,1$. In Figure \ref{fig: M212
and M222} we show the drawings of $M^2_{1,2}$ and $M^2_{2,2}$ as
examples.
\begin{figure}[h]
\centering\includegraphics[scale=1]{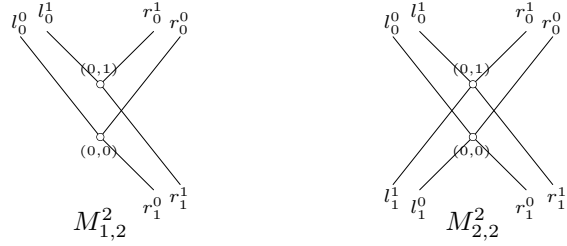}
\caption{\small{Drawings of $M^2_{1,2}$ and $M^2_{2,2}$}}\label{fig:
M212 and M222}
\end{figure}

By counting the crossings in $M^2_{l,r}$, we have
\begin{lemma}\label{lemma: mesh2}
For positive integers $l$ and $r$,
$$\nu_{M^2_{l,r}}(S_l)+\nu_{M^2_{l,r}}(S_r)=(^{l}_{2})+(^{r}_{2})=\frac{l(l-1)}{2}+\frac{r(r-1)}{2}.$$
\end{lemma}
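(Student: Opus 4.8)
The plan is to count crossings in $M^2_{l,r}$ by separating the contribution of the left bunch $S_l$ from that of the right bunch $S_r$; since every line of $S_l$ lies in the closed left semi-plane and every line of $S_r$ in the closed right semi-plane, the two bunches can only meet on the $y$-axis, and by construction they meet only at the two points $(0,0)$ and $(0,1)$, where no genuine crossing is created. Hence $\nu_{M^2_{l,r}}(S_l,S_r)=0$, and it suffices to show $\nu_{M^2_{l,r}}(S_l)=\binom{l}{2}$ and symmetrically $\nu_{M^2_{l,r}}(S_r)=\binom{r}{2}$.

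For the first of these, I would argue as follows. Each pair of distinct lines $l_i$ and $l_{i'}$ (writing $l_i$ for the line through $(0,0)$ and $(0,1)$ in the $i$-th bunch — more precisely, $l_i^0$ and $l_i^1$ together form a piecewise-linear curve passing through both anchor points) shares the two points $(0,0)$ and $(0,1)$. Consider the bounded region enclosed between these two curves on the left of the $y$-axis: because both curves go from $(0,0)$ to $(0,1)$ and are otherwise disjoint from the axis, the closed curve $l_i \cup l_{i'}$ is a simple closed curve (a "bigon") unless the two curves cross. A short parity/Jordan-curve argument shows that two such curves, sharing exactly their two endpoints and otherwise confined to an open half-plane, must cross an \emph{odd} number of times away from the endpoints; in an optimal sub-drawing this is exactly once. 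Thus each of the $\binom{l}{2}$ pairs contributes exactly one crossing, giving $\nu_{M^2_{l,r}}(S_l)=\binom{l}{2}$. The same reasoning applied on the right gives $\nu_{M^2_{l,r}}(S_r)=\binom{r}{2}$, and summing yields the claimed identity $\binom{l}{2}+\binom{r}{2}=\frac{l(l-1)}{2}+\frac{r(r-1)}{2}$.

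The step I expect to be the real obstacle is making the "bigon" argument rigorous, i.e. showing that any two curves in $S_l$ that agree exactly at the two points $(0,0),(0,1)$ and otherwise lie in the open left half-plane must cross an odd number of times in the interior — equivalently, that they cannot be drawn disjointly in the interior. Concretely: near $(0,0)$ the two curves leave into the left half-plane in some cyclic order, and near $(0,1)$ they arrive in some cyclic order; a Jordan-curve / winding-number computation shows these two local orders are reversed, which forces at least one interior intersection, and a mod-$2$ intersection-number argument forces an odd number. I would phrase this cleanly by appealing to the planarity structure of the drawing rather than trying to track coordinates, since the $l_i^j$ are just line segments and the geometry is elementary once the parity statement is isolated. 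Everything else — the vanishing of $\nu_{M^2_{l,r}}(S_l,S_r)$ and the final arithmetic — is immediate from the definition of $M^2_{l,r}$ and the convention that distinct bunches meet the $y$-axis only at the two designated points.
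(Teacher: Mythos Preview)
The paper offers no argument beyond the sentence ``By counting the crossings in $M^2_{l,r}$, we have\ldots''; the lemma is simply read off from Figure~3.1. So your write-up is far more elaborate than anything the authors do, and the comparison reduces to whether your elaboration is sound.

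It is not. Two problems:

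\emph{First}, you model the $i$-th bunch as a single piecewise-linear arc from $(0,0)$ to $(0,1)$. In the definition of $M^2_{l,r}$ a bunch is a pair of half-lines $l^0_i$ (through $(0,0)$) and $l^1_i$ (through $(0,1)$) in the left half-plane; nothing forces them to meet, so $l^0_i\cup l^1_i$ need not be a connected arc joining the two anchor points at all. In the application they do meet (both run to the common neighbour $1j$), but your argument never invokes that.

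\emph{Second}, even granting your arc interpretation, the key topological claim is false. Two Jordan arcs sharing exactly the endpoints $(0,0)$ and $(0,1)$ and otherwise lying in the open left half-plane can perfectly well be nested with zero interior crossings; the half-plane is a disk, and arcs with common boundary endpoints in a disk carry no odd-parity obstruction. Your assertion that the local cyclic orders at the two anchors are ``reversed'' is exactly what would need to be proved, and it does not follow from the hypotheses you list---it follows, if at all, from the specific straight-line geometry of $M^2_{l,r}$, which your argument deliberately avoids.

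What actually pins down the count is that geometry: the $l$ half-lines through $(0,0)$ are pairwise non-crossing in the left half-plane, likewise the $l$ half-lines through $(0,1)$, and because $l^0_i,l^1_i$ head off in essentially the same direction while distinct bunches head off at distinct angles, a direct slope comparison shows that among the four half-lines of bunches $i\neq i'$ exactly one pair (either $l^0_i,l^1_{i'}$ or $l^0_{i'},l^1_i$) meets. That gives one crossing per unordered pair, hence $\binom{l}{2}$ on the left and $\binom{r}{2}$ on the right. Replace the Jordan-curve/parity paragraph with this two-line slope check and the proof is complete.
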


Let
$$\begin{array}{llll}
V(K_{n}\times P_3)&=\{0i,1i,2i\ |\ 0\leq i\leq n-1\},\\
E(K_{n}\times P_3)&=\{(0i,1j),(2i,1j)\ |\ 0\leq i\neq j\leq n-1\}.
\end{array}$$
We construct a drawing $D'_n$ for $K_{n}\times P_3$ based on $D_n$
with $M^2_{\frac{n-1}{2},\frac{n-1}{2}}$ for odd $n$ and
$M^2_{\frac{n}{2},\frac{n-2}{2}}$ or
$M^2_{\frac{n-2}{2},\frac{n}{2}}$ for even $n$. Vertex $a_i$ ($b_i$)
($0\leq i\leq n-1$) in its ``small'' neighborhood in the drawing
$D_n$ is replaced by  two vertices $0i$ and $2i$ vertically (one
vertex $1i$). Now every drawn edge $e$ in $D_n$ which starts from
$a_i$ is replaced by a bunch of two edges described above, which is
drawn along the original edge $e$. Drawings $D'_{n}$ for $n\leq 9$
are shown in Figure \ref{fig: KnP3}.

\begin{figure}[h]
\centering
\includegraphics[scale=0.99]{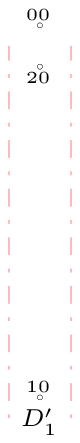}
\hspace{8bp}
\includegraphics[scale=0.99]{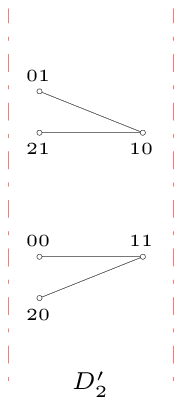}
\hspace{8bp}
\includegraphics[scale=0.99]{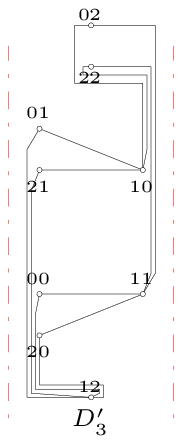}
\hspace{8bp}
\includegraphics[scale=0.99]{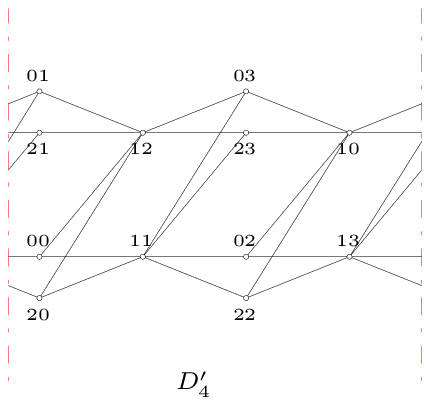}
\hspace{8bp}
\includegraphics[scale=0.99]{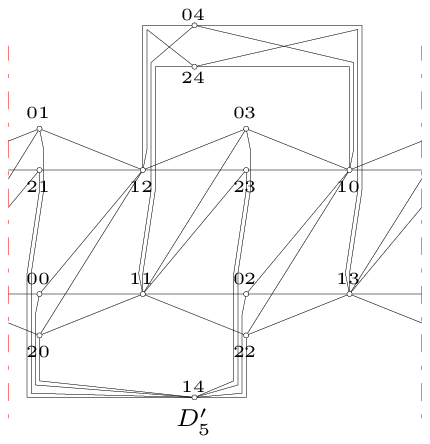}
\includegraphics[scale=0.99]{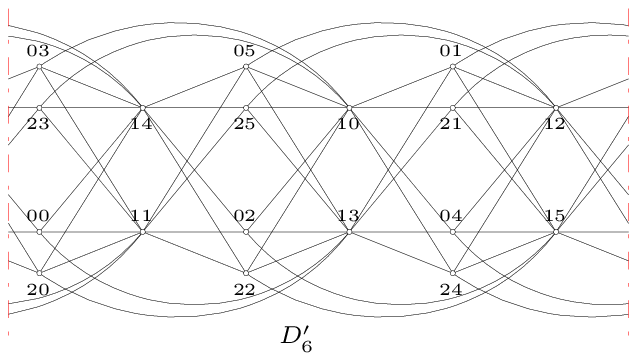}
\includegraphics[scale=0.99]{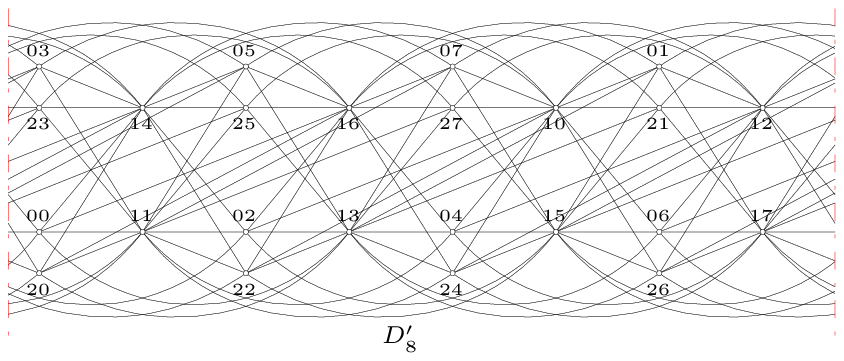}
\includegraphics[scale=0.99]{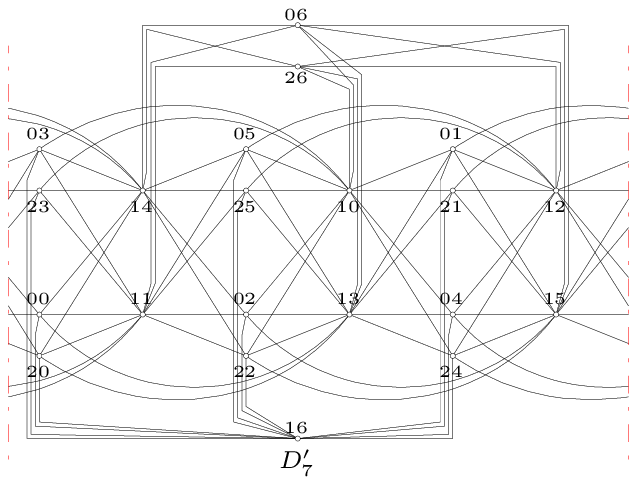}
\includegraphics[scale=0.99]{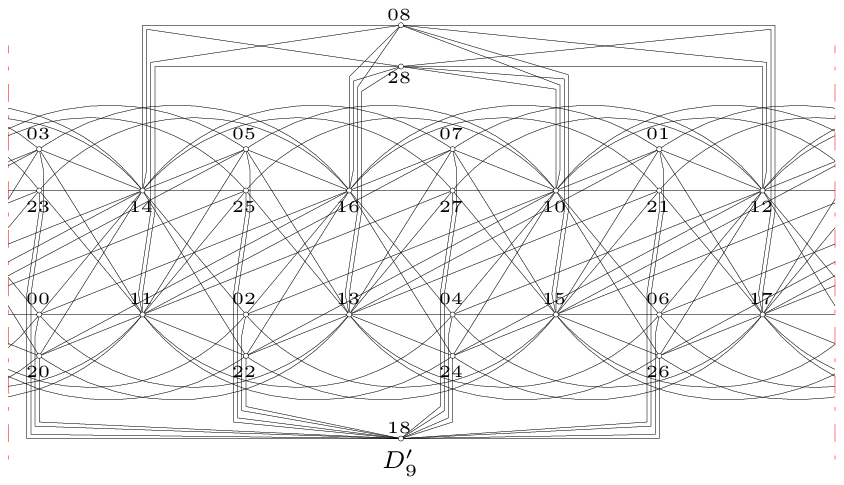}
\caption{\small{Drawings $D'_n$ for $n\leq9$}}\label{fig: KnP3}
\end{figure}

There are totally two types of crossings in $D'_{n}$.

\noindent\textbf{Type 1. } \textbf{Crossings produced by the
crossings in $D_n$}.

Each crossing in $D_n$ will be replaced by $4$ crossings in
$D'_{n+1}$, since we replace every edge with a bunch of two edges.

\noindent\textbf{Type 2. } \textbf{Crossings produced by replacing a
vertex of $D_{n}$ by two vertices in $D'_{n}$}.

The total number of crossings of Type 2 are $n\cdot \nu(M^2_{\frac{n-1}{2},\frac{n-1}{2}})$ for odd $n$ and $n\cdot \nu(M^2_{\frac{n}{2},\frac{n-2}{2}})$ for even $n$.

By Lemma \ref{lemma: mesh2}, for odd $n$,
$$\begin{array}{llll}
\nu(D'_n)&=4\nu(D_n)+n\cdot\nu(M^2_{\frac{n-1}{2},\frac{n-1}{2}})\\
&=4\lfloor\frac{n}{2}\rfloor\lfloor\frac{n-1}{2}\rfloor\lfloor\frac{n-2}{2}\rfloor\lfloor\frac{n-3}{2}\rfloor+n(\frac{\frac{n-1}{2}\frac{n-3}{2}}{2}+\frac{\frac{n-1}{2}\frac{n-3}{2}}{2})\\
&=4\lfloor\frac{n}{2}\rfloor\lfloor\frac{n-1}{2}\rfloor\lfloor\frac{n-2}{2}\rfloor\lfloor\frac{n-3}{2}\rfloor+n\lfloor\frac{n-1}{2}\rfloor\lfloor\frac{n-2}{2}\rfloor.
\end{array}$$
For even $n$,
$$\begin{array}{llll}
\nu(D'_n)&=4\nu(D_n)+n\cdot\nu(M^2_{\frac{n}{2},\frac{n-2}{2}})\\
&=4\lfloor\frac{n}{2}\rfloor\lfloor\frac{n-1}{2}\rfloor\lfloor\frac{n-2}{2}\rfloor\lfloor\frac{n-3}{2}\rfloor+n(\frac{\frac{n}{2}\frac{n-2}{2}}{2}+\frac{\frac{n-2}{2}\frac{n-4}{2}}{2})\\
&=4\lfloor\frac{n}{2}\rfloor\lfloor\frac{n-1}{2}\rfloor\lfloor\frac{n-2}{2}\rfloor\lfloor\frac{n-3}{2}\rfloor+n\lfloor\frac{n-1}{2}\rfloor\lfloor\frac{n-2}{2}\rfloor.
\end{array}$$
Hence we have
\begin{theorem}\label{theorem: knp3} For $n\geq 1$,
$cr(K_{n}\times P_3)\leq 4\lfloor\frac{n}{2}\rfloor\lfloor\frac{n-1}{2}\rfloor\lfloor\frac{n-2}{2}\rfloor\lfloor\frac{n-3}{2}\rfloor+n\lfloor\frac{n-1}{2}\rfloor\lfloor\frac{n-2}{2}\rfloor$.
\end{theorem}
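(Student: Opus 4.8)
The plan is to upgrade the drawing $D_n$ of $K_{n,n}-nK_2$ built in Section 2 (with $\nu(D_n)$ computed in Lemmas \ref{lemma: D1}, \ref{lemma: D4} and \ref{lemma: D5}) into a drawing $D'_n$ of $K_n\times P_3$ by the vertex-splitting/edge-doubling operation already sketched above. Concretely: inside a tiny disk around the former position of each $a_i$ place two vertices $0i,2i$ vertically; keep each $b_i$ as a single vertex $1i$; and replace every drawn edge $a_ib_j$ of $D_n$ by a \emph{bundle} of two edges $0i\,1j,\ 2i\,1j$ routed parallel along the old curve of $a_ib_j$. Since $K_n\times P_3$ decomposes as two copies of $K_{n,n}-nK_2$ (the one carried by the $0$- and $1$-layers and the one carried by the $2$- and $1$-layers) sharing the middle layer, and these two copies are precisely the images of the doubled bundles, $D'_n$ is a genuine drawing of $K_n\times P_3$; hence $cr(K_n\times P_3)\le\nu(D'_n)$ and it suffices to count $\nu(D'_n)$.

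Next I would split the crossings of $D'_n$ into the two types isolated in the text. Outside the tiny disks the bundles follow the original curves, so two bundles cross iff the two original edges crossed in $D_n$, and since each bundle is a pair of edges, one crossing of $D_n$ becomes exactly $2\cdot2=4$ crossings of $D'_n$; this contributes $4\nu(D_n)$. Inside the tiny disk around $a_i$, the $n-1$ edges leaving $a_i$ in $D_n$ fall into two groups (those leaving the disk ``to one side'' and those ``to the other''), of sizes $l$ and $r$ with $l+r=n-1$; after splitting and doubling, the local picture of the $2(n-1)$ bundle-ends meeting $0i$ and $2i$ is by construction a copy of $M^2_{l,r}$, so by Lemma \ref{lemma: mesh2} it carries exactly $\binom{l}{2}+\binom{r}{2}$ crossings. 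By the symmetry of $D_n$ one gets $\{l,r\}=\{(n-1)/2,(n-1)/2\}$ for odd $n$ and $\{l,r\}=\{n/2,(n-2)/2\}$ for even $n$, uniformly over $i$, so the Type-2 total is $n\bigl(\binom{l}{2}+\binom{r}{2}\bigr)$.

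Adding the two contributions gives $\nu(D'_n)=4\nu(D_n)+n\bigl(\binom{l}{2}+\binom{r}{2}\bigr)$. A one-line computation in each parity class shows $\binom{l}{2}+\binom{r}{2}=\lfloor\frac{n-1}{2}\rfloor\lfloor\frac{n-2}{2}\rfloor$, and substituting the values of $\nu(D_n)$ from Lemmas \ref{lemma: D1}--\ref{lemma: D5} (all of which equal $\lfloor\frac n2\rfloor\lfloor\frac{n-1}2\rfloor\lfloor\frac{n-2}2\rfloor\lfloor\frac{n-3}2\rfloor$, vacuously for $n\le4$) yields $\nu(D'_n)=4\lfloor\frac n2\rfloor\lfloor\frac{n-1}2\rfloor\lfloor\frac{n-2}2\rfloor\lfloor\frac{n-3}2\rfloor+n\lfloor\frac{n-1}2\rfloor\lfloor\frac{n-2}2\rfloor$; the small cases $n\le9$ of Figure \ref{fig: KnP3} can be checked directly to confirm the formula there too. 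This is exactly the claimed bound.

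I expect the main obstacle to be the bookkeeping in the crossing count rather than the algebra. One must verify that inside each tiny disk the \emph{only} crossings are the forced $M^2_{l,r}$ ones (no foreign bundle enters a disk, crosses its local configuration, and leaves), that the doubling near each single vertex $1j$ introduces nothing because adjacent edges of the good drawing $D_n$ do not cross, that the two parallel edges of a single bundle can be drawn disjoint, and — for odd $n$ — that the local picture at the ``central'' vertices $a_{n-1},b_{n-1}$, which sit at the centers of the disks in $D_n$ rather than on their boundaries, is still of type $M^2_{(n-1)/2,(n-1)/2}$. Once these routing points are nailed down, the rest is the routine collapsing of floor functions.
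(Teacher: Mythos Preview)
Your proposal is correct and follows essentially the same route as the paper: build $D'_n$ from $D_n$ by splitting each $a_i$ into $0i,2i$ and doubling the incident edges, then count crossings as Type~1 ($4\nu(D_n)$ from doubled bundles) plus Type~2 ($n$ copies of $M^2_{l,r}$ with $\{l,r\}=\{\lfloor(n-1)/2\rfloor,\lceil(n-1)/2\rceil\}$ via Lemma~\ref{lemma: mesh2}), and simplify. Your closing list of verification points (no foreign bundles through a disk, no crossings induced at the unsplit $1j$'s, the central-vertex case for odd $n$) is more explicit than the paper, which simply asserts the two-type decomposition and computes.
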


\section{Upper bound of $cr(K_{n}\times C_4)$}

\indent \indent We define another structure $M^4_{l,r}$ in the real
plane $\mathbb{R}^2$. For the vertical points $(0,0)$ and $(0,1)$,
let $S_l=\{(l^0_{i,0},l^0_{i,1},l^1_{i,0},l^1_{i,1}):0\leq i\leq
l-1\}$ ($S_r=\{(r^0_{i,0},r^0_{i,1},r^1_{i,0},r^1_{i,1}):0\leq i\leq
r-1\}$) be a set of non-vertical bunches of lines in the left
(right) semi-plane, such that the point $(0,j)$ belongs to
$l^j_{i,k}$ ($r^j_{i,k}$) for $0\leq i\leq l-1$ ($0\leq i\leq r-1$),
$j=0,1$ and $k=0,1$. In each bunch of lines of $S_l$ ($S_r$), there
is exact one `self' crossing crossed by $l^0_{i,1}$ and $l^1_{i,1}$
($r^0_{i,1}$ and $r^1_{i,1}$). In Figure \ref{fig: M412 and M422} we
show the drawings of $M^4_{1,2}$ and $M^4_{2,2}$ as examples.
\begin{figure}[h]
\centering\includegraphics[scale=1]{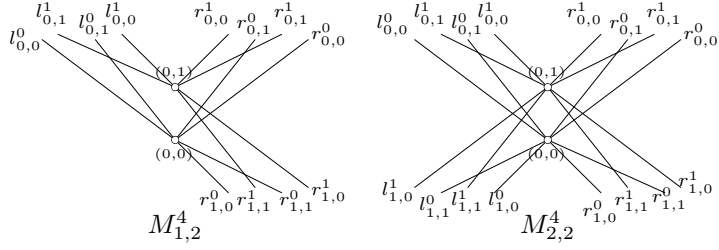}
\caption{\small{Drawings of $M^4_{1,2}$ and $M^4_{2,2}$}}\label{fig:
M412 and M422}
\end{figure}

By counting the crossings in $M^4_{l,r}$, we have
\begin{lemma}\label{lemma: each mesh}
For positive integers $l$ and $r$,
$$\nu_{M^4_{l,r}}(S_l)+\nu_{M^4_{l,r}}(S_r)=(^{2l}_{2})+(^{2r}_{2})=l(2l-1)+r(2r-1).$$
\end{lemma}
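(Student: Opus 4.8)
The plan is to count the crossings of $M^4_{l,r}$ lying within $S_l$ directly, obtain the count within $S_r$ from the left--right mirror symmetry of the construction, and add. I would split $\nu_{M^4_{l,r}}(S_l)$ into two parts: (i) the \emph{self-crossings}, i.e.\ crossings between two lines of the same bunch, and (ii) the crossings between lines of two distinct bunches. By the defining property of $M^4_{l,r}$, each of the $l$ bunches of $S_l$ carries exactly one self-crossing, namely the one between $l^0_{i,1}$ and $l^1_{i,1}$, so part (i) contributes exactly $l$.

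For part (ii), fix two bunches $i\neq i'$. All four lines $l^0_{i,0},l^0_{i,1},l^0_{i',0},l^0_{i',1}$ pass through $(0,0)$ and all four lines $l^1_{i,0},l^1_{i,1},l^1_{i',0},l^1_{i',1}$ pass through $(0,1)$, so no crossing can occur between two lines attached to the same point; the only candidates are the eight ``mixed'' pairs formed by one line through $(0,0)$ and one line through $(0,1)$ coming from the two different bunches. I would show, by reading off from the construction (Figure~\ref{fig: M412 and M422}) the cyclic orders in which the bunches leave $(0,0)$ and $(0,1)$, that exactly four of these eight pairs cross: a bunch of $M^4_{l,r}$ behaves like a ``doubled'' bunch of the $M^2$ picture, so that between two distinct bunches the single crossing of Lemma~\ref{lemma: mesh2} is replaced by $2\times 2=4$ crossings. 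Hence part (ii) contributes $4\binom{l}{2}$. Combining, $\nu_{M^4_{l,r}}(S_l)=l+4\binom{l}{2}=l+2l(l-1)=l(2l-1)=\binom{2l}{2}$, and by mirror symmetry $\nu_{M^4_{l,r}}(S_r)=r(2r-1)=\binom{2r}{2}$; adding yields the claim.

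An alternative route that avoids re-deriving the geometry is to exhibit a crossing-preserving bijection between the crossings of $M^4_{l,r}$ inside $S_l$ (resp.\ $S_r$) and those of $M^2_{2l,2r}$ inside its left (resp.\ right) part: identify bunch $i$ of $M^4_{l,r}$ with the pair of bunches $2i,2i+1$ of $M^2_{2l,2r}$ (matching $l^0_{i,0},l^1_{i,0},l^0_{i,1},l^1_{i,1}$ with $l^0_{2i},l^1_{2i},l^0_{2i+1},l^1_{2i+1}$), and match the self-crossing of bunch $i$ with the unique crossing between bunches $2i$ and $2i+1$ in $M^2_{2l,2r}$; then Lemma~\ref{lemma: mesh2} applied with parameters $2l$ and $2r$ gives $\nu_{M^4_{l,r}}(S_l)+\nu_{M^4_{l,r}}(S_r)=\binom{2l}{2}+\binom{2r}{2}$ directly. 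Either way, the one genuinely nontrivial step is part (ii): justifying that two distinct bunches of $M^4_{l,r}$ meet in exactly four crossings. This reduces to pinning down the combinatorial convention for the order in which the four lines of a bunch emanate from $(0,0)$ and $(0,1)$ and then checking the resulting line arrangement; once that is fixed, the crossing count and the arithmetic ($\binom{2l}{2}=4\binom{l}{2}+l$, and similarly for $r$) are immediate.
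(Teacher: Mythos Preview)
Your argument is correct and is exactly the direct count the paper has in mind; in fact the paper offers no proof beyond the phrase ``By counting the crossings in $M^4_{l,r}$,'' so your decomposition into the $l$ self-crossings plus $4\binom{l}{2}$ inter-bunch crossings (and the mirror statement on the right) supplies precisely the details the paper omits. Your alternative bijection with $M^2_{2l,2r}$ via Lemma~\ref{lemma: mesh2} is a clean extra route not present in the paper.
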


Let
$$\begin{array}{llll}
V(K_{n}\times C_4)&=\{0i,1i,2i,3i\ |\ 0\leq i\leq n-1\},\\
E(K_{n}\times C_4)&=\{(0i,1j),(1i,2j),(2i,3j),(3i,0j)\ |\ 0\leq i\neq j\leq n-1\}.
\end{array}$$
We construct a drawing $D''_n$ for $K_{n}\times C_4$ based on $D_n$
with $M^4_{\frac{n-1}{2},\frac{n-1}{2}}$ for odd $n$ and
$M^4_{\frac{n}{2},\frac{n-2}{2}}$ or
$M^4_{\frac{n-2}{2},\frac{n}{2}}$ for even $n$. Vertex $a_i$ ($b_i$)
($0\leq i\leq n-1$) in its ``small'' neighborhood in the drawing
$D_n$ is replaced by  two vertices $0i$ and $2i$ ($1i$ and $3i$)
vertically. Now every drawn edge $e$ in $D_n$ which starts from
$a_i$ ($b_i$) is replaced by a bunch of four edges described above,
which is drawn along the original edge $e$. For even $n\geq 4$, by
drawing edge $l^{0}_{i,1}$ (or
$l^{1}_{i,1},r^{0}_{i,1},r^{1}_{i,1}$) carefully, $2n$ `self'
crossings can be reduced. Drawings $D''_{n}$ for $n\leq 9$ are shown
in Figure \ref{fig: KnC4}.

\begin{figure}[h]
\centering
\includegraphics[scale=0.99]{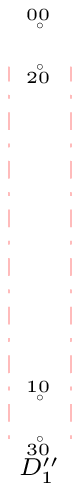}
\hspace{8bp}
\includegraphics[scale=0.99]{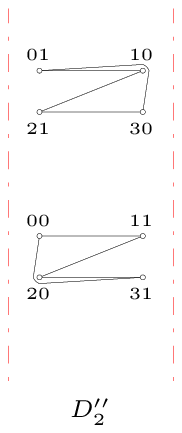}
\hspace{8bp}
\includegraphics[scale=0.99]{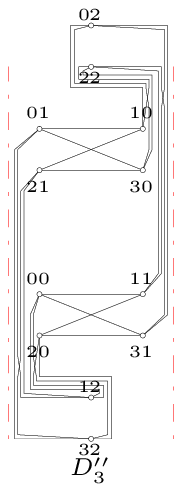}
\hspace{8bp}
\includegraphics[scale=0.99]{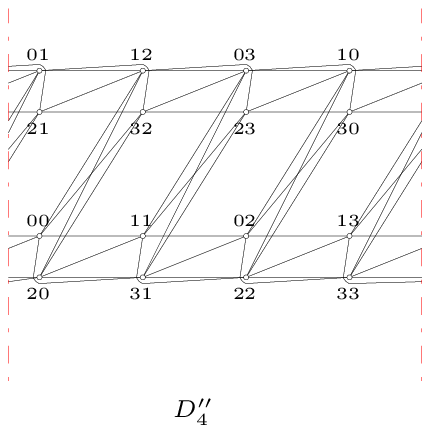}
\hspace{8bp}
\includegraphics[scale=0.99]{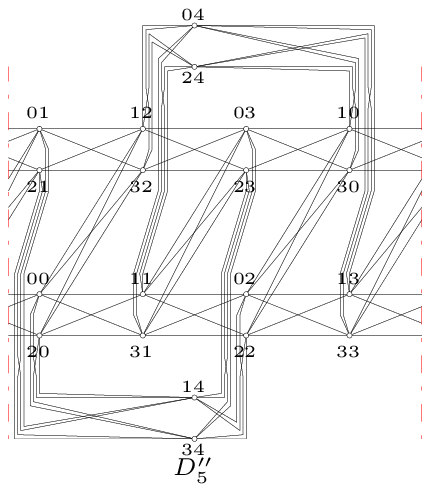}

\includegraphics[scale=0.99]{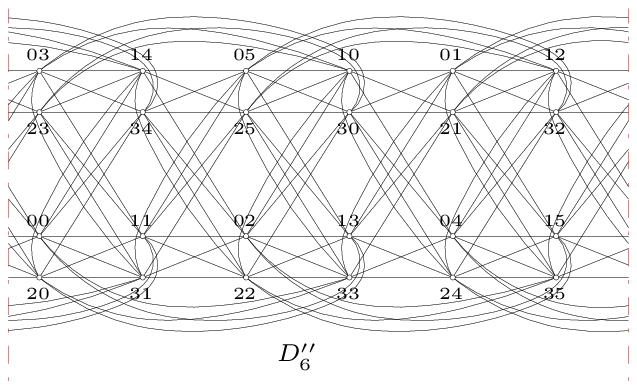}
\includegraphics[scale=0.99]{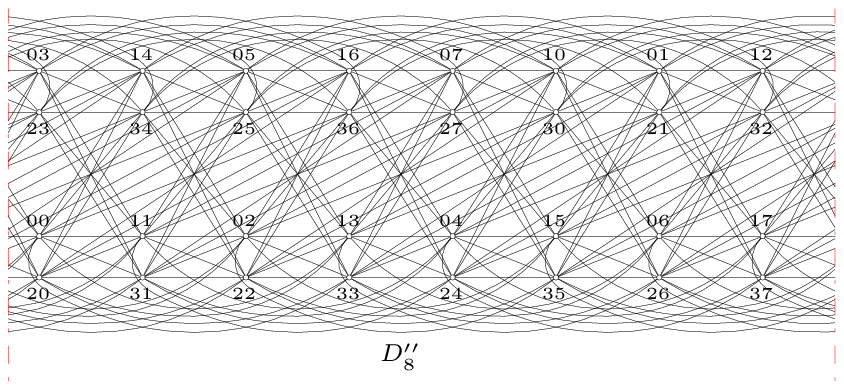}
\includegraphics[scale=0.99]{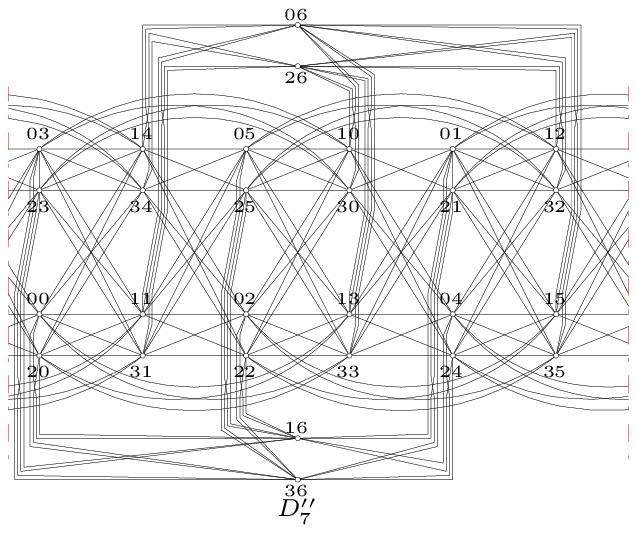}
\includegraphics[scale=0.99]{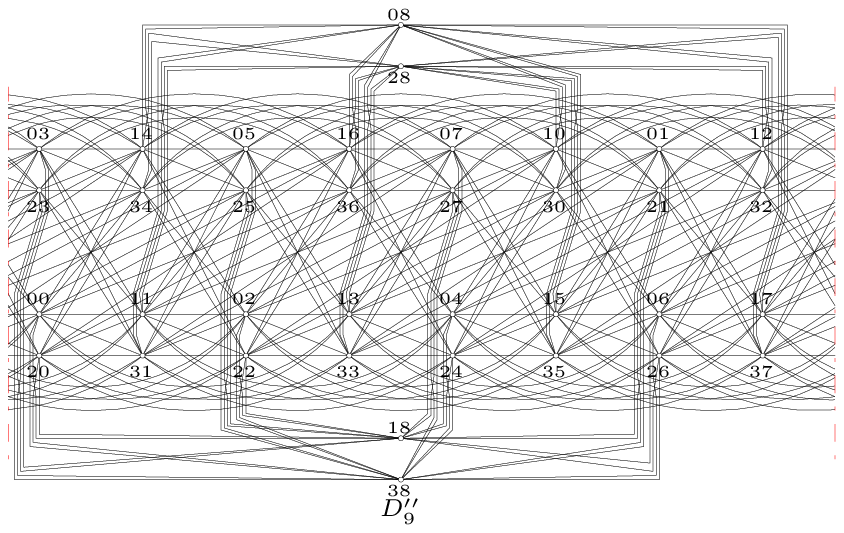}
\caption{\small{Drawings $D''_n$ for $n\leq9$}}\label{fig: KnC4}
\end{figure}

Since $K_1\times C_4\cong4K_1$ and $K_2\times C_4\cong2C_4$, we have
$cr(K_1\times C_4)=cr(K_2\times C_4)=0$. Now, we consider the cases
of $n\geq 3$.

There are totally two types of crossings in $D''_{n}$.

\noindent\textbf{Type 1. } \textbf{Crossings produced by the crossings in $D_n$}.

Each crossing in $D_n$ will be replaced by $16$ crossings in
$D''_{n+1}$, since we replace every edge with a bunch of four edges.

\noindent\textbf{Type 2. } \textbf{Crossings produced by replacing a
vertex of $D_{n}$ by two vertices in $D''_{n}$}.

The total number of crossings of Type 2 are $2n\cdot \nu(M^4_{\frac{n-1}{2},\frac{n-1}{2}})-n(n-1)$ for odd $n$ and $2n\cdot \nu(M^4_{\frac{n}{2},\frac{n-2}{2}})-n(n-1)-2n$ for even $n$. Notice
that we have to subtract $|E(D_n)|=n(n-1)$ since the `self' crossings of each bunch are counted twice (namely both of the endpoints).

By Lemma \ref{lemma: each mesh}, for odd $n\geq3$,
$$\begin{array}{llll}
\nu(D''_n)&=16\nu(D_n)+2n\cdot\nu(M^4_{\frac{n-1}{2},\frac{n-1}{2}})-n(n-1)\\
&=16\lfloor\frac{n}{2}\rfloor\lfloor\frac{n-1}{2}\rfloor\lfloor\frac{n-2}{2}\rfloor\lfloor\frac{n-3}{2}\rfloor+2n(n-1)(n-2)-n(n-1)\\
&=16\lfloor\frac{n}{2}\rfloor\lfloor\frac{n-1}{2}\rfloor\lfloor\frac{n-2}{2}\rfloor\lfloor\frac{n-3}{2}\rfloor+n(n-1)(2n-5).
\end{array}$$
For even $n\geq4$,
$$\begin{array}{llll}
\nu(D''_n)&=16\nu(D_n)+2n\cdot\nu(M^4_{\frac{n}{2},\frac{n-2}{2}})-n(n-1)-2n\\
&=16\lfloor\frac{n}{2}\rfloor\lfloor\frac{n-1}{2}\rfloor\lfloor\frac{n-2}{2}\rfloor\lfloor\frac{n-3}{2}\rfloor+2n(n^2-3n+3)-n(n-1)-2n\\
&=16\lfloor\frac{n}{2}\rfloor\lfloor\frac{n-1}{2}\rfloor\lfloor\frac{n-2}{2}\rfloor\lfloor\frac{n-3}{2}\rfloor+n(n-1)(2n-5).
\end{array}$$
Hence we have
\begin{theorem}\label{theorem: knc4} For $n\geq 3$,
$cr(K_{n}\times C_4)\leq 16\lfloor\frac{n}{2}\rfloor\lfloor\frac{n-1}{2}\rfloor\lfloor\frac{n-2}{2}\rfloor\lfloor\frac{n-3}{2}\rfloor+n(n-1)(2n-5)$.
\end{theorem}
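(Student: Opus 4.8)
The plan is to exhibit an explicit good drawing $D''_n$ of $K_n\times C_4$ and bound its number of crossings, then invoke $cr(K_n\times C_4)\le\nu(D''_n)$. The starting point is the cylindrical drawing $D_n$ of $K_{n,n}-nK_2\cong K_n\times P_2$ from Section~2, whose crossing count is $\lfloor\tfrac n2\rfloor\lfloor\tfrac{n-1}{2}\rfloor\lfloor\tfrac{n-2}{2}\rfloor\lfloor\tfrac{n-3}{2}\rfloor$ by Lemmas~\ref{lemma: D4} and \ref{lemma: D5}. Regarding $C_4$ as the $4$-cycle with layers indexed $0,1,2,3$, I identify the two $K_n$-copies on layers $0,2$ with the ``$a$''-side of $K_{n,n}-nK_2$ and the two copies on layers $1,3$ with the ``$b$''-side: in a small disc around each $a_i$ of $D_n$ I stack the new vertices $0i,2i$ vertically, and around each $b_i$ I stack $1i,3i$. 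Every edge of $D_n$ incident to $a_i$ (resp. $b_i$) is replaced by a bundle of four nearly parallel edges running along the original curve, and inside each small disc the four incident half-bundles are wired to the four new vertices by the local pattern $M^4_{l,r}$ from before Lemma~\ref{lemma: each mesh}, with $(l,r)=(\tfrac{n-1}{2},\tfrac{n-1}{2})$ for odd $n$ and $(l,r)=(\tfrac n2,\tfrac{n-2}{2})$ (or its mirror) for even $n$, the split being dictated by how the half-edges leave the vertex to the left versus the right in $D_n$.

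Next I would verify that $D''_n$ indeed draws $K_n\times C_4$ and that every crossing falls into one of two families. A crossing of two edges of $D_n$, after the $\times4$ blow-up of each of the two edges, becomes exactly $16$ crossings of $D''_n$ (Type~1); because the bundles hug the original curves and the discs are taken small enough, no further crossings of this kind appear or disappear. The only remaining crossings are Type~2: those inside a single small disc, namely among the four half-bundles attached to one vertex together with the one ``self'' crossing built into each $M^4$-bunch. By Lemma~\ref{lemma: each mesh} the crossings strictly inside one disc number $\nu(M^4_{l,r})$ with the appropriate $(l,r)$. Two bookkeeping corrections are then required: first, a ``self'' crossing of a bundle sits in the small discs of both its endpoints and is therefore counted twice, so one subtracts $|E(D_n)|=n(n-1)$; second, for even $n$ one can reroute one of the edges $l^0_{i,1}$ (or $l^1_{i,1},r^0_{i,1},r^1_{i,1}$) at each vertex to destroy two self crossings there, saving a further $2n$. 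This gives the Type~2 totals $2n\,\nu(M^4_{\frac{n-1}{2},\frac{n-1}{2}})-n(n-1)$ for odd $n$ and $2n\,\nu(M^4_{\frac n2,\frac{n-2}{2}})-n(n-1)-2n$ for even $n$.

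Finally I would assemble the bound $\nu(D''_n)=16\,\nu(D_n)+(\text{Type 2 total})$: substitute $\nu(D_n)=\lfloor\tfrac n2\rfloor\lfloor\tfrac{n-1}{2}\rfloor\lfloor\tfrac{n-2}{2}\rfloor\lfloor\tfrac{n-3}{2}\rfloor$ from Lemmas~\ref{lemma: D4}--\ref{lemma: D5} (treating the smallest cases from the figures of Section~2), expand $\nu(M^4_{l,r})$ through Lemma~\ref{lemma: each mesh}, and simplify; in both parities the Type~2 part collapses to the single polynomial $n(n-1)(2n-5)$, yielding the claimed inequality for $n\ge 3$ (the cases $n=1,2$ being trivial since $K_1\times C_4\cong 4K_1$ and $K_2\times C_4\cong 2C_4$). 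The main obstacle is the second step: a careful, picture-driven check that the bundles can genuinely be routed so that the only crossings are the $16$ per old crossing of $D_n$ plus the prescribed local ones, that the left/right split of the half-bundles at each vertex really realizes the stated $M^4_{l,r}$ parameters, and that for even $n$ the $2n$-crossing saving can be performed simultaneously at every vertex without creating new crossings elsewhere.
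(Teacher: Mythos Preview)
Your proposal is correct and follows essentially the same route as the paper: the same blow-up of $D_n$ by four-edge bundles, the same local model $M^4_{l,r}$ at each vertex with the same parity-dependent parameters, the same Type~1/Type~2 crossing decomposition, the same $-n(n-1)$ correction for double-counted self crossings, and the same extra $-2n$ saving in the even case. The only quibble is your phrase ``destroy two self crossings there'' at each vertex, which slightly overstates the per-vertex saving; the paper's formulation is simply that a total of $2n$ self crossings can be eliminated by careful rerouting, which is consistent with the $-2n$ you then use.
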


\section{Lower bounds of $cr(K_{n,n}-nK_2)$, $cr(K_{n}\times P_2)$, $cr(K_{n}\times P_3)$ and $cr(K_{n}\times C_4)$}

\indent \indent We shall introduce the lower bound method proposed
by Leighton \cite{L84}. Let $G_1=(V_1,E_1)$ and $G_2=(V_2,E_2)$ be
graphs. An embedding of $G_1$ in $G_2$ is a couple of mappings
$(\varphi,\kappa)$ satisfying
$$\varphi: V_1\rightarrow V_2 \mbox{ is an injection}$$
$$\kappa: E_1\rightarrow \{\mbox{set of all paths in $G_2$}\},$$ such that
if $uv\in E_1$ then $\kappa(uv)$ is a path between $\varphi(u)$ and
$\varphi(v)$. For any $e\in E_2$ define
$$cg_e(\varphi,\kappa)=|\{f\in E_1:e\in \kappa(f)\}|$$
and
$$cg(\varphi,\kappa)=\max\limits_{e\in E_2}\{cg_e(\varphi,\kappa)\}.$$
The value $cg(\varphi,\kappa)$ is called congestion.

\begin{lemma}\label{Lemma congestion} \mbox{\textup{\cite{L84}}} Let $(\varphi,\kappa)$ be an embedding of $G_1$ in
$G_2$ with congestion $cg(\varphi,\kappa)$. Let $\Delta(G_2)$ denote the maximal degree of $G_2$. Then
$$cr(G_2)\geq \frac{cr(G_1)}{cg^2(\varphi,\kappa)}-\frac{|V_2|}{2}\Delta^2(G_2).$$
\end{lemma}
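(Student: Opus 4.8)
The plan is to turn an optimal drawing of $G_2$ into a drawing of $G_1$ by means of the embedding $(\varphi,\kappa)$ and then estimate how many crossings this creates. Write $c=cg(\varphi,\kappa)$ and $\Delta=\Delta(G_2)$; we may assume $c\ge 1$, since $c=0$ forces $E_1=\emptyset$ and then $cr(G_1)=0$ and there is nothing to prove. Fix a drawing $D_2$ of $G_2$ with exactly $cr(G_2)$ crossings. Replace each edge $e\in E_2$ by a thin ``pipe'' following the drawn arc of $e$, and each vertex $w\in V_2$ by a small disk $\delta_w$, chosen so that two pipes meet only near a crossing of $D_2$ or inside a common endpoint disk. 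For every $f=uv\in E_1$ with $\kappa(f)=e_1e_2\cdots e_k$, route the corresponding edge of $G_1$ as a curve from $\varphi(u)$ to $\varphi(v)$ that traverses the pipes of $e_1,\dots,e_k$ in order and passes through $\delta_w$ at each internal vertex $w$ of the path. Because $\varphi$ is injective and $u\ne v$ the path $\kappa(f)$ has length at least $1$, and because each pipe $e$ carries at most $cg_e(\varphi,\kappa)\le c$ strands, the strands inside one pipe can be drawn pairwise disjoint; a generic perturbation then yields a genuine drawing $D_1$ of $G_1$.

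Next I would sort the crossings of $D_1$ into those occurring inside pipes and those occurring inside vertex disks. A crossing of the first kind is between a strand routed along an edge $e$ and a strand routed along a \emph{different} edge $e'$, and it lies near a crossing point of $e$ and $e'$ in $D_2$; near such a point the two ``cables'' of at most $cg_e\le c$ and $cg_{e'}\le c$ parallel strands can be made to cross transversally, producing at most $c^2$ crossings of $D_1$, so the total over all crossings of $D_2$ is at most $c^2\,cr(G_2)$. For a crossing of the second kind, inside $\delta_w$, let $t_w$ be the number of edges of $G_1$ whose routing path passes through $w$. Counting incidences, each such $f$ contributes $1$ or $2$ to $\sum_{e\ni w}|\{f:e\in\kappa(f)\}|$, so $t_w\le\sum_{e\ni w}cg_e(\varphi,\kappa)\le\deg_{G_2}(w)\cdot c\le\Delta c$. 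Inside $\delta_w$ the $t_w$ curves can be redrawn so that any two of them cross at most once (any family of arcs with fixed endpoints on the boundary circle of a disk admits such a double-crossing-free redrawing), contributing at most $\binom{t_w}{2}\le\binom{\Delta c}{2}<\tfrac12\Delta^2c^2$ crossings; summing over $w\in V_2$ bounds the vertex-disk crossings by $\tfrac12|V_2|\Delta^2c^2$.

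Combining the two estimates gives $cr(G_1)\le\nu(D_1)\le c^2\,cr(G_2)+\tfrac12|V_2|\Delta^2c^2$, and dividing by $c^2$ and rearranging yields $cr(G_2)\ge cr(G_1)/c^2-\tfrac12|V_2|\Delta^2$, which is the claim. I expect the step requiring the most care to be the analysis inside the vertex disks: making rigorous the assertion that the $t_w$ ``turning'' curves can be drawn simultaneously with pairwise at most one crossing (so the local count is genuinely $\binom{t_w}{2}$, not something larger), together with the incidence bound $t_w\le\Delta c$, and checking that the pipe/disk decomposition accounts for every crossing of $D_1$ exactly once, with none slipping through the interface between a pipe and an adjacent disk.
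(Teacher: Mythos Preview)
The paper does not prove this lemma at all; it is quoted from Leighton \cite{L84} as a black box, so there is no ``paper's own proof'' to compare against. Your argument is the standard one behind Leighton's inequality and is correct: route each edge of $G_1$ along its $\kappa$-path inside thickened pipes of an optimal drawing of $G_2$, bound pipe-vs-pipe crossings by $c^2$ per crossing of $D_2$, and bound the crossings created inside each vertex disk by $\binom{t_w}{2}$ with $t_w\le \Delta c$. The only point worth tightening in a write-up is the one you already flag---that the arcs inside $\delta_w$ (some passing through, some terminating at the image point when $w\in\varphi(V_1)$) can be simultaneously redrawn so that each pair meets at most once; this is a routine disk/chord argument, and once stated cleanly the bound $\binom{\Delta c}{2}\le\tfrac12\Delta^2c^2$ per vertex gives exactly the inequality.
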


Let $K^x_{m,n}$ be the complete bipartite multigraph of $m+n$
vertices, in which every two vertices are joined by x parallel
edges.

According to De Klerk \cite{D07} and Kainen \cite{K72}, the
following lemmas hold.

\begin{lemma}\label{Lemma crossing of Kmn}\mbox{\textup{\cite{D07}}} $cr(K_{m,n})\geq
0.8594\lfloor\frac{m}{2}\rfloor\lfloor\frac{m-1}{2}\rfloor
\lfloor\frac{n}{2}\rfloor\lfloor\frac{n-1}{2}\rfloor $.
\end{lemma}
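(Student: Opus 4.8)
The plan is to read the inequality off directly from the result of De~Klerk, Pasechnik and Schrijver quoted in the introduction: they proved $cr(K_{m,n})\ge 0.8594\,Z(m,n)$, and since by definition $Z(m,n)=\lfloor\frac m2\rfloor\lfloor\frac{m-1}2\rfloor\lfloor\frac n2\rfloor\lfloor\frac{n-1}2\rfloor$, the statement of the lemma is merely a rewriting of theirs. So at the level of this paper nothing new has to be proved; one only needs to cite \cite{D07} and expand $Z(m,n)$.

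For context, and to indicate how a bound of this shape arises in the first place, I would record the elementary counting reduction. Recall Kleitman's theorem that the Zarankiewicz conjecture holds whenever one side has at most six vertices; in particular $cr(K_{3,n})=\lfloor\frac n2\rfloor\lfloor\frac{n-1}2\rfloor$. Take an optimal (hence good) drawing $D$ of $K_{m,n}$ with parts $A$ of size $m$ and $B$ of size $n$. Each $3$-element subset $T\subseteq A$ induces a drawing of $K_{3,n}$ on $T\cup B$, so it contributes at least $\lfloor\frac n2\rfloor\lfloor\frac{n-1}2\rfloor$ crossings; summing over all $\binom m3$ choices of $T$, and using that every crossing of $D$ is incident with exactly two vertices of $A$ and hence lies in exactly $m-2$ of these subdrawings, gives
\[
cr(K_{m,n})\ \ge\ \frac{\binom m3}{m-2}\Big\lfloor\frac n2\Big\rfloor\Big\lfloor\frac{n-1}2\Big\rfloor\ =\ \frac{m(m-1)}{6}\Big\lfloor\frac n2\Big\rfloor\Big\lfloor\frac{n-1}2\Big\rfloor\ \ge\ \tfrac23\,Z(m,n),
\]
the last inequality holding because $\frac{m(m-1)}{6}\ge\frac23\lfloor\frac m2\rfloor\lfloor\frac{m-1}2\rfloor$ for every $m$.

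The main obstacle is that this elementary averaging only yields the constant $\tfrac23\approx0.667$, short of $0.8594$. Reaching the stronger constant requires replacing $K_{3,n}$ by larger complete bipartite graphs $K_{q,n}$ and feeding in sharper lower bounds on $\lim_{n\to\infty}cr(K_{q,n})/Z(q,n)$; these are what \cite{D07} obtains, via a semidefinite-programming relaxation of Kleitman's parity counting, optimized over several small values of $q$ and then combined by the same averaging principle as above. Carrying out that optimization is the genuinely hard part, and I would not redo it here: I would simply invoke \cite{D07} for the stated constant, keeping the counting argument above only as motivation.
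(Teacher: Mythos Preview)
Your proposal is correct and matches the paper's treatment: the paper simply states the lemma with a citation to \cite{D07} and gives no proof, so your observation that one only needs to cite \cite{D07} and expand $Z(m,n)$ is exactly what happens. The additional counting argument you supply as motivation is accurate but is not present in the paper.
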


\begin{lemma}\label{Lemma crossing of
xKmn}\mbox{\textup{\cite{K72}}}
$cr(K^x_{m,n})=x^2cr(K_{m,n})$.
\end{lemma}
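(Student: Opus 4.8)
The plan is to prove the identity by a two-sided estimate, establishing $cr(K^x_{m,n})\le x^2\,cr(K_{m,n})$ and $cr(K^x_{m,n})\ge x^2\,cr(K_{m,n})$ separately.

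For the upper bound I would start from an optimal drawing $D$ of $K_{m,n}$ with $\nu(D)=cr(K_{m,n})$ and ``thicken'' it: replace each edge $e$ of $K_{m,n}$ by a bundle of $x$ pairwise non-crossing parallel arcs running inside a thin tubular neighbourhood of $e$. Two arcs belonging to the same bundle never cross, while two arcs lying along distinct edges $e$ and $f$ cross exactly as often as $e$ and $f$ do in $D$; hence each of the $\nu(D)$ crossings of $D$ is replaced by exactly $x^2$ crossings, producing a drawing of $K^x_{m,n}$ with $x^2\,cr(K_{m,n})$ crossings, so $cr(K^x_{m,n})\le x^2\,cr(K_{m,n})$.

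For the lower bound, take an optimal drawing $D'$ of $K^x_{m,n}$, so $\nu(D')=cr(K^x_{m,n})$. For each pair $\{a_i,b_j\}$ denote the $x$ parallel edges by $e_{ij}^{(1)},\dots,e_{ij}^{(x)}$, choose an index $s_{ij}\in\{1,\dots,x\}$ independently and uniformly at random, and delete all other parallel edges; this yields a subdrawing $\widetilde D$ of the simple graph $K_{m,n}$. A crossing of $D'$ between two edges of the \emph{same} bundle never survives in $\widetilde D$, whereas a crossing between $e_{ij}^{(s)}$ and $e_{kl}^{(t)}$ with $\{i,j\}\ne\{k,l\}$ survives precisely when $s_{ij}=s$ and $s_{kl}=t$, an event of probability $1/x^2$ (the selections for two distinct bundles are independent, even if the bundles share an endpoint). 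Hence $\mathbb{E}[\nu(\widetilde D)]\le \nu(D')/x^2 = cr(K^x_{m,n})/x^2$, so some choice of the $s_{ij}$ gives a drawing of $K_{m,n}$ with at most $cr(K^x_{m,n})/x^2$ crossings, i.e.\ $cr(K_{m,n})\le cr(K^x_{m,n})/x^2$. Combining the two inequalities gives the claim.

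The routine part is the crossing bookkeeping in the thickened drawing; the only point needing care is the lower-bound step, where one must check that crossings internal to a bundle can only \emph{decrease} the count in $\widetilde D$ (so an inequality, not merely an equality, survives the averaging) and that the linearity-of-expectation estimate is applied to an honest drawing of the simple graph $K_{m,n}$. I do not expect a serious obstacle here: this is essentially a folklore ``blow-up'' argument, elementary once the two drawings are set up correctly.
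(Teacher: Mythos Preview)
Your argument is correct and is the standard ``blow-up'' proof of this identity. Note, however, that the paper does not give its own proof of this lemma: it is simply quoted from Kainen~\cite{K72} and used as a black box in the lower-bound section. There is thus nothing to compare your approach against within the paper itself; your write-up supplies a self-contained justification where the authors chose to cite the literature instead. Both directions you sketch are sound: the thickening construction for the upper bound is unproblematic, and your probabilistic deletion argument for the lower bound is clean (the key observation that intra-bundle crossings contribute zero to $\widetilde D$, and that selections for distinct bundles are independent even when the bundles share an endpoint, is exactly what makes the $1/x^2$ survival probability valid).
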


Now we are in a position to show the lower bound of $cr(K_{n,n}-nK_2)$.

\begin{theorem}\label{Theorem Lower Bound for Knn}
$cr(K_{n,n}-nK_2)\geq
\frac{0.8594}{(1+\frac{3}{n-1})^2}\lfloor\frac{n}{2}\rfloor
^2\lfloor\frac{n-1}{2}\rfloor ^2-n(n-1)^2$.
\end{theorem}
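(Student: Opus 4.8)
The plan is to apply Leighton's inequality (Lemma~\ref{Lemma congestion}) with $G_2=K_{n,n}-nK_2$ and with $G_1=K^{n-1}_{n,n}$, the complete bipartite multigraph in which every two vertices on opposite sides are joined by $n-1$ parallel edges; this choice is dictated by the shape of the target bound, since $cr(K^{n-1}_{n,n})=(n-1)^2cr(K_{n,n})$ by Lemma~\ref{Lemma crossing of xKmn}, while $\Delta(G_2)=n-1$, $|V(G_2)|=2n$, and we will produce an embedding of $G_1$ into $G_2$ of congestion $n+2$, so that $\frac{(n-1)^2}{(n+2)^2}=\frac{1}{(1+3/(n-1))^2}$.

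First I would fix the embedding $(\varphi,\kappa)$. Set $\varphi(a_i)=a_i$ and $\varphi(b_i)=b_i$. For a pair $\{a_i,b_j\}$ with $i\neq j$, the edge $a_ib_j$ survives in $K_{n,n}-nK_2$, so route all $n-1$ parallel copies of $a_ib_j$ straight along that edge. For the missing diagonal pair $\{a_i,b_i\}$, route its $n-1$ copies, indexed by $j\in\{0,\dots,n-1\}\setminus\{i\}$, along the length-three paths $a_i,b_j,a_{\ell(i,j)},b_i$, where $\ell(i,j)\in\{0,\dots,n-1\}\setminus\{i,j\}$ is to be chosen so that, for every fixed $i$, the map $j\mapsto\ell(i,j)$ is injective, and, for every fixed $j$, the map $i\mapsto\ell(i,j)$ is injective. (For odd $n$ one may simply take $\ell(i,j)\equiv 2i-j\pmod n$, which satisfies all requirements; for even $n$ a similar ``Latin-square-type'' rule works after a bounded number of local adjustments.)

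Next I would bound the congestion. Fix an edge $e=a_pb_q$ of $G_2$, so $p\neq q$. It carries the $n-1$ direct copies of $\{a_p,b_q\}$; it occurs as the first edge $a_ib_j$ of a diagonal path only for $(i,j)=(p,q)$, hence for at most one copy; by injectivity of $i\mapsto\ell(i,j)$ it occurs as a middle edge $a_{\ell(i,j)}b_j$ for at most one copy; and by injectivity of $j\mapsto\ell(i,j)$ it occurs as a last edge $a_{\ell(i,j)}b_i$ for at most one copy. The three edges of any length-three path are distinct, so $cg_e(\varphi,\kappa)\le(n-1)+3=n+2$, whence $cg(\varphi,\kappa)\le n+2$. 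I expect the genuine work to be concentrated precisely in the previous paragraph: exhibiting the function $\ell$ with both injectivity properties and the avoidance conditions $\ell(i,j)\notin\{i,j\}$ for all admissible $i,j$ at once (linear rules over $\mathbb{Z}_n$ run into a parity obstruction when $n$ is even, forcing a careful patching that must not raise the maximum load above $n+2$).

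Finally, Lemma~\ref{Lemma congestion} gives
$$cr(K_{n,n}-nK_2)\ \ge\ \frac{cr(K^{n-1}_{n,n})}{(n+2)^2}-\frac{2n}{2}(n-1)^2\ =\ \frac{(n-1)^2\,cr(K_{n,n})}{(n+2)^2}-n(n-1)^2,$$
and substituting $cr(K_{n,n})\ge 0.8594\lfloor\frac{n}{2}\rfloor^2\lfloor\frac{n-1}{2}\rfloor^2$ from Lemma~\ref{Lemma crossing of Kmn} (with $m=n$), together with $\frac{(n-1)^2}{(n+2)^2}=\frac{1}{(1+3/(n-1))^2}$, yields the stated inequality. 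For the small values of $n$ where the routing construction is awkward, the right-hand side is non-positive and the bound is vacuous.
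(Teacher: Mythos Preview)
Your approach is correct and yields exactly the stated bound, but it differs from the paper's in an instructive way. The paper embeds $K_{n,n}^{(n-1)(n-2)}$ rather than $K_{n,n}^{\,n-1}$: for each missing diagonal $u_iv_i$ it routes the $(n-1)(n-2)$ parallel copies along \emph{all} length-three paths $a_i b_\alpha a_\beta b_i$ with $\alpha,\beta\in\{0,\dots,n-1\}\setminus\{i\}$, $\alpha\neq\beta$. A direct count then shows every edge $a_pb_q$ is hit $(n-1)(n-2)$ times by the straight routes and exactly $3(n-2)$ times by diagonal detours (once each as first, middle, last edge, for $n-2$ choices of the free index), giving congestion $(n-2)(n+2)$. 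Since
\[
\frac{\bigl((n-1)(n-2)\bigr)^2}{\bigl((n-2)(n+2)\bigr)^2}=\frac{(n-1)^2}{(n+2)^2},
\]
the ratio entering Leighton's inequality is identical to yours, and the two computations coincide from that point on.

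What each approach buys: the paper's ``use every detour'' trick makes the congestion count a one-line symmetry argument and, crucially, needs no auxiliary combinatorial object. Your version is leaner in multiplicity but pushes the work into the existence of the map $\ell$ with both row- and column-injectivity and the avoidance $\ell(i,j)\notin\{i,j\}$; this is exactly an idempotent Latin square of order $n$ (set $L(i,i)=i$ and $L(i,j)=\ell(i,j)$ for $i\neq j$), which does exist for every $n\neq 2$, so your outline is sound. Your remark that small $n$ are covered by the bound being non-positive is not quite enough, since the right-hand side is eventually positive for even $n$ as well; but citing the standard existence result for idempotent Latin squares (or giving an explicit construction for even $n$) closes that gap. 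The paper simply sidesteps the issue.
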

\begin{proof} By Lemmas \ref{Lemma congestion},
\ref{Lemma crossing of Kmn} and \ref{Lemma crossing of xKmn}, we
only need to construct an embedding $(\varphi,\kappa)$ of
$K^{(n-1)(n-2)}_{n,n}$ into $K_{n,n}-nK_2$ with congestion
$cg(\varphi,\kappa)=(n-2)(n+2)$.

Let $\alpha^k_{i}\beta^k_{i}$ be the $k$-th $(n-1,2)$-arrangement, where $\alpha^k_{i},\beta^k_{i}\in \{0,1,2,\cdots,n-1\}-\{i\}$ and $\alpha^k_{i}\neq \beta^k_{i}$. Let
$$\begin{array}{llll}
V(K^{(n-1)(n-2)}_{n,n})&=\{u_i,v_i\ |\ 0\leq i\leq n-1\},\\
E(K^{(n-1)(n-2)}_{n,n})&=\{(u_i,v_j)^k\ |\ 0\leq i,j\leq n-1, 1\leq k\leq (n-1)(n-2)\}.
\end{array}$$
Let $\varphi (u_i)=a_i$, $\varphi (v_i)=b_i$,
$\kappa((u_i,v_i)^k)=P_{a_ib_{\alpha^k_{i}}a_{\beta^k_{i}}b_i}$ for
$0\leq i\leq n-1$, and $\kappa((u_i,v_j)^k)=(a_i,b_j)$ for $0\leq
i\neq j\leq n-1$. Then $cg_e(\varphi,\kappa)=(n-2)(n+2)$ for every
$e\in E(K_{n,n}^{(n-1)(n-2)})$. This completes the proof of Theorem
\ref{Theorem Lower Bound for Knn}.
\end{proof}

Since $K_n\times P_2\cong K_{n,n}-nK_2$, we have
\begin{corollary}
$cr(K_n\times P_2)\geq \frac{0.8594}{(1+\frac{3}{n-1})^2}\lfloor\frac{n}{2}\rfloor ^2\lfloor\frac{n-1}{2}\rfloor ^2-n(n-1)^2$.
\end{corollary}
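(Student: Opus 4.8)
The plan is to apply Leighton's congestion method (Lemma \ref{Lemma congestion}) in the form already set up: it suffices to embed a suitable complete bipartite multigraph $K^{x}_{n,n}$ into $G_2 = K_{n,n}-nK_2$ and then combine $cr(K^x_{n,n}) = x^2 cr(K_{n,n})$ (Lemma \ref{Lemma crossing of xKmn}) with the De Klerk lower bound $cr(K_{n,n}) \geq 0.8594 \lfloor n/2\rfloor^2 \lfloor (n-1)/2\rfloor^2$ (Lemma \ref{Lemma crossing of Kmn}). Concretely, I would take $G_1 = K^{(n-1)(n-2)}_{n,n}$, map $u_i \mapsto a_i$ and $v_i \mapsto b_i$, route each of the $(n-1)(n-2)$ parallel copies of a ``missing'' edge $(u_i,v_i)$ along a length-3 path $a_i b_{\alpha} a_{\beta} b_i$ (one for each ordered pair of distinct indices $\alpha,\beta$ in $\{0,\dots,n-1\}\setminus\{i\}$, which is exactly $(n-1)(n-2)$ choices), and route each genuine edge $(u_i,v_j)$ with $i\neq j$ directly along the single edge $(a_i,b_j)$ of $G_2$ (there are $(n-1)(n-2)$ parallel copies of each such edge, all routed along the same single edge).

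The arithmetic heart of the argument is the congestion count $cg_e(\varphi,\kappa) = (n-2)(n+2) = n^2-4$ for every edge $e = (a_i,b_j)$ of $G_2$. There are two contributions. First, the $(n-1)(n-2)$ parallel direct copies of $(u_i,v_j)$ all use $e$. Second, $e$ is used as an internal edge of a routing path $P_{a_k b_{\alpha} a_{\beta} b_k}$ whenever $e$ is one of its three edges; I would count, for a fixed $e=(a_i,b_j)$, in how many of the length-3 paths $e$ appears as the first edge ($k=i$, $\alpha=j$, $\beta$ arbitrary among $n-2$ values), as the middle edge ($a_\beta b_\alpha$ with $\{\alpha,\beta\}=\{i,j\}$, but the path is indexed by its center vertex $k\notin\{\alpha,\beta\}$, giving $n-2$ choices of $k$ and $2$ orientations — care is needed here), and as the last edge ($k=j$, $\beta=i$, $\alpha$ arbitrary). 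Summing these and adding the $(n-1)(n-2)$ direct copies should yield $(n-1)(n-2) + 3(n-2) = (n-2)(n+2)$; the bookkeeping must be done carefully so that every edge of $G_2$ receives exactly the same total, since $cg$ is a maximum and the bound degrades if some edge is overloaded. Then with $cg = (n-2)(n+2)$, $x = (n-1)(n-2)$, $|V_2| = 2n$, and $\Delta(G_2) = n-1$, Lemma \ref{Lemma congestion} gives
$$
cr(K_{n,n}-nK_2) \geq \frac{x^2 cr(K_{n,n})}{cg^2} - \frac{|V_2|}{2}\Delta^2 \geq \frac{(n-1)^2(n-2)^2 \cdot 0.8594 \lfloor \frac{n}{2}\rfloor^2 \lfloor\frac{n-1}{2}\rfloor^2}{(n-2)^2(n+2)^2} - n(n-1)^2,
$$
and since $(n-1)^2/(n+2)^2 = 1/(1+3/(n-1))^2$, this is exactly the claimed bound.

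The step I expect to be the genuine obstacle is verifying that the routing $\kappa$ is well-defined as stated and that the congestion is \emph{uniform} — i.e. that I really can list $(n-1)(n-2)$ distinct length-3 paths from $a_i$ to $b_i$ inside $K_{n,n}-nK_2$ (the constraint being that consecutive vertices must be non-equal-index, so $a_i b_\alpha$ needs $\alpha\neq i$ and $a_\beta b_i$ needs $\beta\neq i$ and the middle edge $b_\alpha a_\beta$ needs $\alpha\neq\beta$ — precisely the $(n-1,2)$-arrangement condition) and that, simultaneously, the resulting load on each of the $n(n-1)$ edges of $G_2$ comes out to the same value $n^2-4$. Once the symmetry of the construction under the natural $S_n$-action on indices is invoked, uniformity across edges is immediate, so the real work is the single local count at one edge; everything after that is the substitution above.
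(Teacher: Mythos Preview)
Your proposal is correct and follows exactly the paper's approach: the same embedding of $K^{(n-1)(n-2)}_{n,n}$ into $K_{n,n}-nK_2$, the same length-3 routing $a_ib_\alpha a_\beta b_i$ for missing edges, and the same congestion $(n-2)(n+2)$, with your computation $(n-1)(n-2)+3(n-2)$ supplying the detail the paper omits (your hesitation about ``2 orientations'' for the middle edge is unfounded---the middle edge $a_\beta b_\alpha$ forces $\beta=i,\alpha=j$ uniquely, giving just $n-2$ choices of $k$). To finish the Corollary itself you only need the one-line observation $K_n\times P_2\cong K_{n,n}-nK_2$, which the paper states explicitly.
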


Similar to the proof of Theorem \ref{Theorem Lower Bound for Knn},
by constructing an embedding $(\varphi,\kappa)$ of
$K^{(n-1)(n-2)}_{2n,n}$ into $K_{n}\times P_3$ with congestion
$cg(\varphi,\kappa)=(n-2)(n+2)$, we can get
\begin{theorem}\label{Theorem Lower Bound for Kn2n}
$cr(K_{n}\times P_3)\geq \frac{0.8594}{(1+\frac{3}{n-1})^2}n\lfloor\frac{2n-1}{2}\rfloor\lfloor\frac{n}{2}\rfloor\lfloor\frac{n-1}{2}\rfloor-\frac{3}{2}n(2n-2)^2$.
\end{theorem}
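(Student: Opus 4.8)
The plan is to copy the strategy of the proof of Theorem~\ref{Theorem Lower Bound for Knn}: combine Leighton's inequality (Lemma~\ref{Lemma congestion}) with Lemmas~\ref{Lemma crossing of Kmn} and~\ref{Lemma crossing of xKmn}. It suffices to build an embedding $(\varphi,\kappa)$ of the multigraph $K^{(n-1)(n-2)}_{2n,n}$ into $K_n\times P_3$ with congestion $cg(\varphi,\kappa)=(n-2)(n+2)$. Granting this, note $|V(K_n\times P_3)|=3n$ and $\Delta(K_n\times P_3)=2(n-1)$ (attained at the vertices $1i$, which are adjacent to $0j$ and $2j$ for all $j\neq i$), so Lemma~\ref{Lemma congestion} gives
$$cr(K_n\times P_3)\ \ge\ \frac{cr(K^{(n-1)(n-2)}_{2n,n})}{(n-2)^2(n+2)^2}-\frac{3n}{2}(2n-2)^2.$$
By Lemma~\ref{Lemma crossing of xKmn}, Lemma~\ref{Lemma crossing of Kmn} and $\lfloor 2n/2\rfloor=n$,
$$cr(K^{(n-1)(n-2)}_{2n,n})=(n-1)^2(n-2)^2\,cr(K_{2n,n})\ \ge\ 0.8594\,(n-1)^2(n-2)^2\,n\Big\lfloor\tfrac{2n-1}{2}\Big\rfloor\Big\lfloor\tfrac{n}{2}\Big\rfloor\Big\lfloor\tfrac{n-1}{2}\Big\rfloor,$$
and since $(n-1)^2(n-2)^2/\big((n-2)^2(n+2)^2\big)=(n-1)^2/(n+2)^2=1/(1+\tfrac{3}{n-1})^2$, this is exactly the asserted bound.

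To construct the embedding, write the $2n$-side of $K^{(n-1)(n-2)}_{2n,n}$ as $\{u_i,u'_i\mid 0\le i\le n-1\}$ and the $n$-side as $\{v_i\mid 0\le i\le n-1\}$, and set $\varphi(u_i)=0i$, $\varphi(u'_i)=2i$, $\varphi(v_i)=1i$. For $i\neq j$, route every one of the $(n-1)(n-2)$ parallel copies of $u_iv_j$ along the single edge $(0i,1j)$ and each copy of $u'_iv_j$ along $(2i,1j)$. For the ``diagonal'' pairs $u_iv_i$ and $u'_iv_i$ --- the only pairs whose images are non-adjacent in $K_n\times P_3$ --- use length-$3$ paths built from $(n-1,2)$-arrangements exactly as in Theorem~\ref{Theorem Lower Bound for Knn}: with $\alpha^k_i\beta^k_i$ the $k$-th $(n-1,2)$-arrangement over $\{0,\dots,n-1\}\setminus\{i\}$, put $\kappa((u_i,v_i)^k)=P_{0i,\,1\alpha^k_i,\,0\beta^k_i,\,1i}$ and $\kappa((u'_i,v_i)^k)=P_{2i,\,1\alpha^k_i,\,2\beta^k_i,\,1i}$. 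The conditions $\alpha^k_i\neq i$, $\beta^k_i\neq\alpha^k_i$, $\beta^k_i\neq i$ are precisely what makes these genuine paths of $K_n\times P_3$.

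Next I would compute $cg_e(\varphi,\kappa)$; by the top/bottom symmetry it suffices to take $e=(0i,1j)$ with $i\neq j$. This edge is used $(n-1)(n-2)$ times by the direct routes of $u_iv_j$; it is the first edge of $\kappa((u_i,v_i)^k)$ exactly when $\alpha^k_i=j$, hence $n-2$ times; the last edge of $\kappa((u_j,v_j)^k)$ exactly when $\beta^k_j=i$, hence $n-2$ times; and the middle edge of $\kappa((u_l,v_l)^k)$ exactly when $\alpha^k_l=j$ and $\beta^k_l=i$, which forces $l\notin\{i,j\}$ and occurs once for each such $l$, hence $n-2$ times. No path through the $2$-layer ever uses $e$. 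Thus $cg_e(\varphi,\kappa)=(n-1)(n-2)+3(n-2)=(n-2)(n+2)$ for every edge of $K_n\times P_3$, so $cg(\varphi,\kappa)=(n-2)(n+2)$, and the computation above finishes the proof.

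The step I expect to need the most care is this congestion count: one must verify that the three ``diagonal'' contributions to a fixed edge $(0i,1j)$ are mutually exclusive, that each is exactly $n-2$, and --- crucially for Lemma~\ref{Lemma congestion}, where the \emph{maximum} congestion enters --- that no edge is loaded beyond $(n-2)(n+2)$. This is the only place the index restrictions of the $(n-1,2)$-arrangements are actually used; everything else (substituting $|V|=3n$, $\Delta=2n-2$, $\lfloor 2n/2\rfloor=n$, and simplifying $(n-1)^2/(n+2)^2$) is routine.
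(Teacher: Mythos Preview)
Your proposal is correct and follows exactly the approach the paper indicates: embed $K^{(n-1)(n-2)}_{2n,n}$ into $K_n\times P_3$ with congestion $(n-2)(n+2)$ and apply Lemmas~\ref{Lemma congestion}--\ref{Lemma crossing of xKmn}. The paper merely states that this embedding exists ``similar to the proof of Theorem~\ref{Theorem Lower Bound for Knn}''; your explicit construction---two copies of the length-$3$ detour scheme from Theorem~\ref{Theorem Lower Bound for Knn}, one through the $0$-layer and one through the $2$-layer---together with the edge-by-edge congestion count $(n-1)(n-2)+3(n-2)=(n-2)(n+2)$ supplies precisely the details the paper omits.
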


And by constructing an embedding $(\varphi,\kappa)$ of
$K^{(n-1)(n-2)}_{2n,2n}$ into $K_{n}\times C_4$ with congestion
$cg(\varphi,\kappa)=(n-2)(n+2)$, we can get
\begin{theorem}\label{Theorem Lower Bound for K2n2n}
$cr(K_{n}\times P_4)\geq \frac{0.8594}{(1+\frac{3}{n-1})^2}n^2\lfloor\frac{2n-1}{2}\rfloor^2-2n(2n-2)^2$.
\end{theorem}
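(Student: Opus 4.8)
The plan is to apply the congestion method of Lemma~\ref{Lemma congestion} in exactly the way it was used in Theorem~\ref{Theorem Lower Bound for Knn}. I first record that the statement concerns $K_n\times C_4$: the sentence preceding it embeds $K^{(n-1)(n-2)}_{2n,2n}$ into $K_n\times C_4$, and the displayed leading term $0.8594\,n^2\lfloor\frac{2n-1}{2}\rfloor^2$ is precisely the De~Klerk bound of Lemma~\ref{Lemma crossing of Kmn} for $cr(K_{2n,2n})$, so ``$P_4$'' is a slip for ``$C_4$'' and I prove the inequality in that form. I therefore set $G_1=K^{(n-1)(n-2)}_{2n,2n}$ and $G_2=K_n\times C_4$, and note at once that $|V(G_2)|=4n$ and, since $C_4$ is $2$-regular and $K_n$ is $(n-1)$-regular, every vertex of $G_2$ has degree $2(n-1)$, so $\Delta(G_2)=2n-2$.

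The core task is to build an embedding $(\varphi,\kappa)$ with congestion $(n-2)(n+2)$. Identify one colour class of $K_{2n,2n}$ with the even layers $\{0i,2i:0\leq i\leq n-1\}$ and the other with the odd layers $\{1i,3i:0\leq i\leq n-1\}$, with $\varphi$ the identity on the $K_n$-indices. Of the $4n^2$ pairs joining the two classes, the $4n(n-1)$ pairs with distinct indices, namely $0i1j,\,2i1j,\,2i3j,\,0i3j$ for $i\neq j$, are genuine edges of $K_n\times C_4$; I route all $(n-1)(n-2)$ parallel copies of each such pair straight along that single edge. The remaining $4n$ same-index pairs $0i1i,\,0i3i,\,2i1i,\,2i3i$ are non-edges, and for each of their $(n-1)(n-2)$ parallel copies I use a length-three detour whose two interior vertices are indexed by the $k$-th $(n-1,2)$-arrangement $\alpha^k_i\beta^k_i$, exactly as in Theorem~\ref{Theorem Lower Bound for Knn} (for instance $0i\to 3\alpha^k_i\to 2\beta^k_i\to 1i$ realises $0i1i$, with the other three non-edges handled by cyclic rotations of this pattern).

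I then must verify $cg(\varphi,\kappa)=(n-2)(n+2)$. A fixed host edge carries the $(n-1)(n-2)$ direct copies of the unique colour-pair routed along it, plus the interior segments of detours traversing it. A clean way to get the detour contribution is a global count: there are $4n$ non-edges, each split into $(n-1)(n-2)$ detours of three edges, giving $12n(n-1)(n-2)$ detour edge-uses spread over the $4n(n-1)$ host edges; by the index- and cyclic-symmetry of the construction this load is uniform, so each host edge receives $3(n-2)$ of them, just as in the single-non-edge count for $K_{n,n}-nK_2$. Hence $cg_e=(n-1)(n-2)+3(n-2)=(n-2)(n+2)$ for every host edge $e$. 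I expect this step to be the main obstacle: one must choose the four families of cyclic detours so that the uniformity genuinely holds (equivalently, check per host edge that its roles as first, middle and last detour edge each contribute exactly $n-2$), the bookkeeping being four times heavier than in Theorem~\ref{Theorem Lower Bound for Knn}.

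It remains to substitute into Lemma~\ref{Lemma congestion}. By Lemmas~\ref{Lemma crossing of xKmn} and~\ref{Lemma crossing of Kmn}, $cr(G_1)=(n-1)^2(n-2)^2\,cr(K_{2n,2n})\geq 0.8594\,(n-1)^2(n-2)^2\,n^2\lfloor\frac{2n-1}{2}\rfloor^2$. Dividing by $cg^2=(n-2)^2(n+2)^2$ cancels $(n-2)^2$ and turns the surviving factor $(n-1)^2/(n+2)^2$ into $1/(1+\frac{3}{n-1})^2$, since $\frac{n+2}{n-1}=1+\frac{3}{n-1}$, while $\frac{|V(G_2)|}{2}\Delta^2(G_2)=\frac{4n}{2}(2n-2)^2=2n(2n-2)^2$. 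Combining the two terms yields $cr(K_n\times C_4)\geq\frac{0.8594}{(1+\frac{3}{n-1})^2}\,n^2\lfloor\frac{2n-1}{2}\rfloor^2-2n(2n-2)^2$, which is the claimed bound.
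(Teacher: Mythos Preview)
Your proposal is correct and follows exactly the route the paper indicates: the paper merely asserts that an embedding of $K^{(n-1)(n-2)}_{2n,2n}$ into $K_n\times C_4$ with congestion $(n-2)(n+2)$ exists and then quotes the resulting bound, while you supply the explicit detour paths and the per-edge congestion count (the four detour families you describe, with the $3(n-2)$ detour load verified edge-by-edge, do give uniform congestion $(n-2)(n+2)$). Your observation that ``$P_4$'' is a typo for ``$C_4$'' is also correct.
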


\section{Conclusion}

By Theorem \ref{theorem: knn}, Theorem \ref{theorem: knp3}, Theorem \ref{theorem: knc4} and Theorem \ref{Theorem Lower Bound for Knn}-\ref{Theorem Lower Bound for K2n2n}, we have {\tiny
 $$\frac{0.8594}{(1+\frac{3}{n-1})^2}\lfloor\frac{n}{2}\rfloor ^2\lfloor\frac{n-1}{2}\rfloor ^2-n(n-1)^2\leq cr(K_n\times P_2)=cr(K_{n,n}-nK_2)\leq
 \lfloor\frac{n}{2}\rfloor\lfloor\frac{n-1}{2}\rfloor\lfloor\frac{n-2}{2}\rfloor\lfloor\frac{n-3}{2}\rfloor,$$
 $$\frac{0.8594}{(1+\frac{3}{n-1})^2}n\lfloor\frac{2n-1}{2}\rfloor\lfloor\frac{n}{2}\rfloor\lfloor\frac{n-1}{2}\rfloor-\frac{3}{2}n(2n-2)^2\leq cr(K_{n}\times P_3)\leq
4\lfloor\frac{n}{2}\rfloor\lfloor\frac{n-1}{2}\rfloor\lfloor\frac{n-2}{2}\rfloor\lfloor\frac{n-3}{2}\rfloor+n\lfloor\frac{n-1}{2}\rfloor\lfloor\frac{n-2}{2}\rfloor,$$
$$\frac{0.8594}{(1+\frac{3}{n-1})^2}n^2\lfloor\frac{2n-1}{2}\rfloor^2-2n(2n-2)^2\leq cr(K_{n}\times P_4)\leq
16\lfloor\frac{n}{2}\rfloor\lfloor\frac{n-1}{2}\rfloor\lfloor\frac{n-2}{2}\rfloor\lfloor\frac{n-3}{2}\rfloor+n(n-1)(2n-5) \mbox{ for }n\geq 3.$$} From \cite{F86}, we have Lemma \ref{lemma: Di}
\begin{lemma}\label{lemma: Di}
Let $i$ be the least number of the edges of a graph $G$ whose deletion from $G$ results in a planar subgraph of $G$, then $cr(G)\geq i$.
\end{lemma}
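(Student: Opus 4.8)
The plan is to derive the inequality directly from the definition of the crossing number by a deletion argument. Fix an optimal drawing $D$ of $G$, so that $\nu(D)=cr(G)$, and assume (as one always may for an optimal drawing) that $D$ is \emph{good}: no edge crosses itself, adjacent edges do not cross, no three edges share a crossing point, and two edges cross at most once. The goal is to exhibit a set $F\subseteq E(G)$ with $|F|\le cr(G)$ whose removal destroys all crossings, so that $G-F$ is planar; since $i$ is by definition the minimum size of such a set, this yields $i\le |F|\le cr(G)$, i.e.\ $cr(G)\ge i$.

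To build $F$, I would form the auxiliary ``conflict graph'' $H$ whose vertices are the edges of $G$ that participate in at least one crossing of $D$, with two such vertices joined in $H$ precisely when the corresponding edges cross in $D$. Because $D$ is good, distinct crossings involve distinct unordered pairs of edges, so $|E(H)|$ equals the number of crossings of $D$, namely $cr(G)$. Now take $F$ to be any vertex cover of $H$ obtained by picking one endpoint of each edge of $H$; then $|F|\le |E(H)|=cr(G)$.

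It remains to check that deleting $F$ from $D$ leaves a crossing-free drawing of $G-F$. Every crossing of $D$ corresponds to an edge $\{e,e'\}$ of $H$, and since $F$ is a vertex cover of $H$, at least one of $e,e'$ lies in $F$ and is therefore removed; hence no crossing of $D$ survives in the induced drawing of $G-F$. Thus $G-F$ is planar and $i\le |F|\le cr(G)$, as required.

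The argument is essentially routine; the only point that needs care is avoiding double counting, since a single edge of $G$ may take part in many crossings — this is exactly why one passes through the vertex-cover formulation rather than naively assigning a distinct deleted edge to each crossing. The appeal to the goodness of an optimal drawing (to ensure crossings biject with edges of $H$) is the other small hypothesis one should make explicit, but it is standard.
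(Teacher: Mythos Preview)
Your argument is correct. In fact the paper does not supply a proof of this lemma at all: it is quoted from \cite{F86} and used as a black box. So there is no ``paper's own proof'' to compare against; what you have written is the standard deletion argument and establishes the inequality cleanly.

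Two small remarks on presentation. First, the vertex-cover packaging, while correct, is more machinery than you need: simply running through the crossings of an optimal good drawing $D$ one by one and placing one of the two participating edges into $F$ already gives $|F|\le \nu(D)=cr(G)$, and by construction every crossing loses an edge, so $G-F$ inherits a crossing-free drawing. The conflict graph $H$ adds nothing beyond a name for this procedure. Second, your closing comment inverts the role of ``double counting'': the possible repetition of an edge across several crossings only \emph{helps} the bound $|F|\le cr(G)$ (you may pick the same edge more than once, shrinking $|F|$), so there is no danger in the naive assignment. The only genuine subtlety you need is the goodness of $D$, to guarantee that each crossing determines a well-defined unordered pair of distinct edges, and you have stated this explicitly.
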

By Lemma \ref{lemma: Di}, we have
\begin{lemma}\label{lemma: K55}
$cr(K_{5,5}-5K_2)=4$.
\end{lemma}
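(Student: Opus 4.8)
The plan is to pin $cr(K_{5,5}-5K_2)$ between two matching bounds. The upper bound $cr(K_{5,5}-5K_2)\le 4$ is already in hand: apply Theorem~\ref{theorem: knn} with $n=5$, noting that $\lfloor\frac{5}{2}\rfloor\lfloor\frac{4}{2}\rfloor\lfloor\frac{3}{2}\rfloor\lfloor\frac{2}{2}\rfloor = 2\cdot 2\cdot 1\cdot 1 = 4$, so the drawing $D_5$ of Figure~\ref{fig: Dn} already witnesses $cr(K_{5,5}-5K_2)\le 4$.

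For the matching lower bound I would use the skewness estimate of Lemma~\ref{lemma: Di}, which is precisely why that lemma was quoted just before. The graph $K_{5,5}-5K_2$ is simple and bipartite, with $|V|=10$ vertices and $|E|=5\cdot 5-5=20$ edges. Any simple bipartite planar graph on at least three vertices has at most $2|V|-4$ edges: it is triangle-free, so in a plane embedding every face is bounded by at least four edges, and feeding $2|E|\ge 4f$ into Euler's formula yields the bound (this holds irrespective of connectivity, so it applies to every subgraph on the same $10$-vertex set). Consequently every planar subgraph of $K_{5,5}-5K_2$ has at most $2\cdot 10-4=16$ edges, so at least $20-16=4$ edges must be deleted to reach planarity; that is, the quantity $i$ of Lemma~\ref{lemma: Di} satisfies $i\ge 4$, whence $cr(K_{5,5}-5K_2)\ge i\ge 4$. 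Combining with the upper bound gives $cr(K_{5,5}-5K_2)=4$.

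There is essentially no serious obstacle here. The only points needing care are that the bipartite Euler bound $|E|\le 2|V|-4$ is invoked in its connectivity-free form and applied to the full $10$-vertex vertex set, together with the trivial arithmetic $5\cdot 5-5=20$ and $2\cdot 10-4=16$; one should also note $K_{5,5}-5K_2$ is connected with more than three vertices so the bound is non-vacuous. Everything else follows immediately from Theorem~\ref{theorem: knn} and Lemma~\ref{lemma: Di}.
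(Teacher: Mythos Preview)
Your proof is correct and follows essentially the same route as the paper: the upper bound comes from Theorem~\ref{theorem: knn} at $n=5$, and the lower bound comes from Lemma~\ref{lemma: Di} together with the Euler-formula edge bound for bipartite (girth~$\ge 4$) planar graphs, yielding $i\ge 20-16=4$. The paper carries out the same Euler/face-count computation explicitly rather than quoting the $|E|\le 2|V|-4$ inequality, but the argument is identical.
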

\begin{proof} Let $i$ be the least number of the edges of graph $G=K_{5,5}-5K_2$ whose deletion from $G$ results in a planar subgraph $G^{*}$ of $G$. Consider $G^{*}$ has 10 vertices, $20-i$ edges. Let $D^{*}$ be a
planar drawing of $G^{*}$ and $f$ denote the number of the faces in $D^{*}$. Then according to the Euler Polyhedron Formula,
$$\begin{array}{llll}
10-(20-i)+f=2,\\
f=12-i.
\end{array}$$
Since the girth of $G$ is 4, by counting the number of edges of each face in $D^{*}$, we have
$$\begin{array}{llll}
(12-i)\times 4\leq 2(20-i),\\
i\geq 4.
\end{array}$$
By Lemma \ref{lemma: Di}, $cr(K_{5,5}-5K_2)\geq 4$. By Theorem
\ref{theorem: knn}, $cr(K_{5,5}-5K_2)\leq 4$. So
$cr(K_{5,5}-5K_2)=4$.
\end{proof}
 Furthermore, we have the following conjecture.
\begin{conjecture}\label{conjecture: Dn}
$cr(K_{n,n}-nK_2)=\lfloor\frac{n}{2}\rfloor\lfloor\frac{n-1}{2}\rfloor\lfloor\frac{n-2}{2}\rfloor\lfloor\frac{n-3}{2}\rfloor$.
\end{conjecture}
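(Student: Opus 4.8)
We outline a possible route to Conjecture~\ref{conjecture: Dn}; only the lower bound $cr(K_{n,n}-nK_2)\ge\lfloor\frac n2\rfloor\lfloor\frac{n-1}2\rfloor\lfloor\frac{n-2}2\rfloor\lfloor\frac{n-3}2\rfloor$ is at issue, the matching upper bound being Theorem~\ref{theorem: knn}. The most promising classical line of attack is a Kleitman-style counting/deletion recursion anchored at the base value $cr(K_{5,5}-5K_2)=4$ of Lemma~\ref{lemma: K55}. The structural fact it exploits is that for every $I\subseteq\{0,\dots,n-1\}$ the subgraph of $K_{n,n}-nK_2$ induced by $\{a_i,b_i:i\in I\}$ is a copy of $K_{|I|,|I|}-|I|K_2$; in particular, deleting a single ``twin pair'' $\{a_i,b_i\}$ leaves $K_{n-1,n-1}-(n-1)K_2$. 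Fix an optimal good drawing $D$, and classify each crossing, say between $a_pb_q$ and $a_rb_s$, by the number $t=|\{p,q,r,s\}|\in\{2,3,4\}$ of distinct subscripts it involves; let $c_t$ count the type-$t$ crossings. Summing $\nu$ over the induced subdrawings on all $k$-element subsets and recording, for each crossing, into how many of them it survives, gives, for every $k$,
\begin{equation*}
\binom{n-2}{k-2}c_2+\binom{n-3}{k-3}c_3+\binom{n-4}{k-4}c_4\ \ge\ \binom{n}{k}\,cr(K_{k,k}-kK_2).
\end{equation*}

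A first technical step is to argue that the degenerate crossings are negligible: a good drawing has at most $\binom n2$ crossings of type $2$ (at most one per pair $\{a_pb_q,a_qb_p\}$) and only $O(n^3)$ of type $3$, so $c_2+c_3=O(n^3)$ is of strictly smaller order than the target $\Theta(n^4)$. Taking $k=n-1$ then turns the displayed inequality into a recursion $f(n)\ge\frac{n}{n-4}\,f(n-1)-(\text{lower-order error})$, with $f(m)=cr(K_{m,m}-mK_2)$ and $f(5)=4$, in which $\frac{n}{n-4}=\binom n4/\binom{n-1}4$. The crux is that, even granting tight control of the error terms, such a recursion — exactly like its analogue for $K_n$ — can only yield a constant-factor lower bound, at best of the same strength as Theorem~\ref{Theorem Lower Bound for Knn} and probably weaker; the leading constant it produces is strictly smaller than the one the target value requires. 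This shortfall is intrinsic: the conjectured value equals $4Z(n)$, where $Z(n)$ is the conjectured value of $cr(K_n)$, so Conjecture~\ref{conjecture: Dn} is entangled with Guy's conjecture and should be expected to be of comparable difficulty.

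To remove the spurious constant one would have to abandon global averaging in favour of a local ``responsibility'' analysis, in the spirit of Kleitman's exact determinations of $cr(K_{5,n})$ and $cr(K_{6,n})$: for each subscript $i$, bound from below the number of crossings that an optimal drawing is forced to charge to the bundle of edges at $a_i$ and to the bundle at $b_i$, and then show that these local minima cannot all be realised simultaneously. I expect that to be the main obstacle, and it is plausibly as hard as the corresponding step for $K_n$. Failing a complete argument, a realistic nearer goal is to settle the conjecture for small $n$ — say, as far as $cr(K_n)$ itself is known — by combining the recursion above with a computer search, exactly as the case $n=5$ was handled in Lemma~\ref{lemma: K55}.
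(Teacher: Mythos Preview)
The statement you are addressing is stated in the paper as a \emph{conjecture}, not a theorem: the paper offers no proof of it beyond the verification for $n\le 5$ via Lemma~\ref{lemma: D1} and Lemma~\ref{lemma: K55}. There is therefore no ``paper's own proof'' to compare your proposal against.

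Your write-up is not a proof either, and you are explicit about this: you sketch the natural Kleitman-style averaging recursion over induced $K_{k,k}-kK_2$ subgraphs, and then correctly observe that this mechanism loses a constant factor and so cannot reach the exact value $\lfloor n/2\rfloor\lfloor (n-1)/2\rfloor\lfloor (n-2)/2\rfloor\lfloor (n-3)/2\rfloor$. Your diagnosis that the conjecture is entangled with Guy's conjecture (via the identity with $4Z(n)$) and hence of comparable difficulty is sound, and your fallback suggestion --- push the small cases by computer as far as $cr(K_n)$ is known --- is in the same spirit as the paper's own remark that the conjecture holds for $n\le 5$.

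In short: there is no gap to name because you have not claimed a proof, and the paper has not supplied one. Your discussion is a reasonable research commentary on an open problem, not a proof proposal in the usual sense; if the assignment was to reproduce the paper's argument, be aware that none exists for general $n$.
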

Conjecture \ref{conjecture: Dn} holds for $n\leq 5$ by Lemma
\ref{lemma: D1} and Lemma \ref{lemma: K55}.

\end{document}